\title{Achieving Global Optimality for Weighted Sum-Rate Maximization in the $K$-User Gaussian Interference Channel with Multiple
Antennas \footnote{L. Liu is with the Department of Electrical and
Computer Engineering, National University of Singapore
(e-mail:liu\_liang@nus.edu.sg).} \footnote{R. Zhang is with the
Department of Electrical and Computer Engineering, National
University of Singapore (e-mail: elezhang@nus.edu.sg). He is also
with the Institute for Infocomm Research, A*STAR, Singapore.}
\footnote{K. C. Chua is with the Department of Electrical and
Computer Engineering, National University of Singapore
(e-mail:eleckc@nus.edu.sg).}}
\author{Liang Liu, Rui Zhang, and Kee-Chaing Chua}
\begin{document}
\maketitle \thispagestyle{empty}

\begin{abstract}
Characterizing the global maximum of weighted sum-rate (WSR) for the
$K$-user Gaussian interference channel (GIC), with the interference
treated as Gaussian noise, is a key problem in wireless
communication. However, due to the users' mutual interference, this
problem is in general non-convex and thus cannot be solved directly
by conventional convex optimization techniques. In this paper, by
jointly utilizing the monotonic optimization and rate profile
techniques, we develop a new framework to obtain the globally
optimal power control and/or beamforming solutions to the WSR
maximization problems for the GICs with single-antenna transmitters
and single-antenna receivers (SISO), single-antenna transmitters and
multi-antenna receivers (SIMO), or multi-antenna transmitters and
single-antenna receivers (MISO). It is assumed that the transmitted
signals have circularly symmetric complex Gaussian distributions and
are independent over time. Different from prior work, this paper
proposes to maximize the WSR in the achievable rate region of the
GIC directly by exploiting the facts that the achievable rate region
is a ``normal'' set and the users' WSR is a ``strictly increasing''
function over the rate region. Consequently, the WSR maximization is
shown to be in the form of monotonic optimization over a normal set
and thus can be solved globally optimally by the existing outer
polyblock approximation algorithm. However, an essential step in the
algorithm hinges on how to efficiently characterize the intersection
point on the Pareto boundary of the achievable rate region with any
prescribed ``rate profile'' vector. This paper shows that such a
problem can be transformed into a sequence of
signal-to-interference-plus-noise ratio (SINR) feasibility problems,
which can be solved efficiently by existing techniques. Numerical
results validate that the proposed algorithms can achieve the global
WSR maximum for the SISO, SIMO or MISO GIC, which serves as a
performance benchmark for existing heuristic algorithms.
\end{abstract}

\begin{keywords}
Beamforming, power control, interference channel, multi-antenna
system, non-linear optimization, weighted sum-rate maximization.
\end{keywords}

\setlength{\baselineskip}{1.0\baselineskip}
\newtheorem{definition}{\underline{Definition}}[section]
\newtheorem{fact}{Fact}
\newtheorem{assumption}{Assumption}
\newtheorem{theorem}{\underline{Theorem}}[section]
\newtheorem{lemma}{\underline{Lemma}}[section]
\newtheorem{corollary}{\underline{Corollary}}[section]
\newtheorem{proposition}{\underline{Proposition}}[section]
\newtheorem{example}{\underline{Example}}[section]
\newtheorem{remark}{\underline{Remark}}[section]
\newtheorem{algorithm}{\underline{Algorithm}}[section]
\newcommand{\mv}[1]{\mbox{\boldmath{$ #1 $}}}

\section{Introduction}\label{sec:introduction}

Gaussian interference channel (GIC) is a basic mathematical model
that characterizes many real-life interference-limited communication
systems. The information-theoretic study on the GIC has a long
history, but the capacity region of the GIC still remains unknown in
general, even for the two-user case. The best achievable rate region
for the two-user GIC to date was established by Han and Kobayashi in
\cite{Han81}, which utilizes rate splitting at transmitters, joint
decoding at receivers, and time sharing among codebooks. This
achievable rate region was recently proven to be within 1-bit of the
capacity region of the GIC in \cite{Etkin08}. However,
capacity-approaching techniques in general require non-linear
multi-user encoding and decoding, which may not be suitable for
practical systems. A more pragmatic approach that leads to
suboptimal achievable rates is to allow only single-user encoding
and decoding by treating the interference from all other unintended
users as additive Gaussian noise. For this approach, the key design
challenge lies in how to optimally allocate transmit resources such
as power, bandwidth, and antenna beam among different users to
minimize the network performance loss due to their mutual
interference. Recently, \cite{Jafar09} showed that the circularly
symmetric complex Gaussian (CSCG) distribution for the transmitted
signals is in general non-optimal for the rate maximization in GIC
with the interference treated as noise. By means of symbol
extensions over time and/or asymmetric complex signaling, the
weighted sum-rate (WSR) of GIC can be further improved. However, to
our best knowledge, applying such techniques will result in more
complicated WSR maximization problems, for which how to obtain the
globally optimal solutions still remains an open problem, even for
the case of $2$-user GIC. Thus, for simplicity, in this paper we
adopt the conventional assumption for the GIC that the transmitted
signals have an independent CSCG distribution over time.

The research on the GIC with interference treated as noise has
recently drawn significant attention due to the advance in
cooperative inter-cell interference (ICI) management for cellular
networks. Traditionally, most of the studies on resource allocation
for cellular networks focus on the single-cell setup, while the ICI
experienced by a receiver in one cell caused by the transmitters in
other cells is minimized by means of frequency reuse, which avoids
the same frequency band to be used by adjacent cells. However, most
beyond-3G wireless systems advocate to increase the frequency reuse
factor and even allow it to be one or so-called ``universal
frequency reuse'', due to which the issue of ICI becomes more
crucial. Consequently, joint resource allocation across neighboring
cells becomes a practically appealing approach for managing the ICI.
If the mobile stations (MSs) in each cell are separated for
transmission in frequency via orthogonal frequency-division
multiple-access (OFDMA) or in time via time-division multiple-access
(TDMA), then the active links in different cells transmitting at the
same frequency tone or in the same time slot will interfere with
each other, which can be modeled by a GIC. More specifically, if the
base stations (BS) and MSs are each equipped with one single
antenna, the system can be modeled as the single-input single-output
(SISO) GIC, termed as SISO-IC. If the BSs are each equipped with
multiple antennas while MSs are each equipped with one single
antenna, then in the uplink the system can be modeled as the
single-input multiple-output (SIMO) GIC, termed as SIMO-IC, and in
the downlink as the multiple-input single-input (MISO) GIC, termed
as MISO-IC.\footnote{It is worth noting that even for the
traditional single-cell setup with space-division multiple-access
(SDMA), i.e., the multi-antenna BS simultaneously communicating with
more than one single-antenna MSs, the MISO-IC and SIMO-IC models are
also applicable if the linear transmit/receive
precoding/equalization is implemented at the BS.}

The achievable rate region of SISO-IC, SIMO-IC or MISO-IC, with the
single-user detection (SUD) by treating the interference as Gaussian
noise, is in general a non-convex set due to the coupled
interference among users. As a result, how to efficiently find the
optimal power control and/or beamforming solutions to achieve the
maximum WSR for different types of GICs is a challenging problem. It
is worth noting that a great deal of valuable scholarly work
\cite{Gjendemsjoe08}-\cite{Schimidt} has contributed to resolving
this problem. For SISO-IC, various efficient power control schemes
have been studied. The authors in \cite{Gjendemsjoe08} showed that
in the two-user case the optimal power allocation to the sum-rate
maximization problem is ``binary'', i.e., either one user transmits
with full power and the other user shuts down, or both users
transmit with full power. However, this result does not hold in
general when the number of users is greater than two. Based on game
theory, an ``asynchronous distributed pricing (ADP)'' algorithm was
proposed in \cite{Huang06} whereby locally optimal solutions can be
obtained for WSR maximization. In \cite{Chiang07}, the WSR
maximization problem was transformed into a signomial programming
(SP) problem, which was efficiently solved by constructing a series
of geometric programming (GP) problems through the approach of
successive convex approximation. Similar to ADP, this algorithm only
guarantees locally optimal solutions. As for the case of parallel
SISO-IC, the authors in \cite{Yu06}, \cite{Luo08} showed that the
duality gap for the WSR maximization problem is zero when the number
of parallel GICs becomes asymptotically large. As a result, the
Lagrange duality method can be applied to decouple the problem into
parallel sub-problems in the dual domain. However, the power
optimization in each sub-problem for a given GIC is still
non-convex. For an extensive survey of power control algorithms for
SISO-IC, please refer to \cite{Chiang08}. Furthermore, for MISO-IC,
the optimality of transmit beamforming for achieving the maximum WSR
with SUD has been proven in \cite{Rui10,Shang09}. In
\cite{Rui10,Jorswieck08}, the complete characterization of all
Pareto optimal rates for MISO-IC was studied. To maximize the WSR,
an iterative algorithm was proposed in \cite{Jorswieck} from an
egotistic versus altruistic viewpoint, and other ``price-based''
algorithms (see, e.g., \cite{Schimidt} and references therein) were
also developed. However, these algorithms in general cannot achieve
the global WSR maximum for MISO-IC.

Different from the above prior work in which the power and/or
beamforming vectors were optimized directly for WSR maximization in
the GIC, in this paper we propose a new approach that maximizes the
WSR in the achievable rate region of the GIC directly. This approach
is based on the following two key observations: $(1)$ the WSR is a
strictly increasing function with respect to individual user rates;
and $(2)$ the achievable rate region is a ``normal'' set
\cite{Rubinov01}. Accordingly, the WSR maximization problem for the
GIC belongs to the class of optimization problems so-called
monotonic optimization over a normal set, for which the global
optimality can be achieved by an iterative ``outer polyblock
approximation'' algorithm \cite{Rubinov01}. However, one challenging
requirement of this algorithm is a unique characterization of the
Pareto boundary of the achievable rate region since at each
iteration of the algorithm one particular point on the Pareto
boundary that corresponds to the maximum achievable sum-rate in a
prescribed direction needs to be determined. This problem is
efficiently solved in this paper by utilizing a so-called ``rate
profile'' approach \cite{Mohseni06}, which transforms the original
problem into a sequence of signal-to-interference-plus-noise ratio
(SINR) feasibility problems. It is also shown in this paper that
such feasibility problems can be efficiently solved by existing
techniques for various types of GICs.

It is worth noting that rate profile was first proposed in
\cite{Mohseni06} as an alternative method to WSR maximization for
characterizing the Pareto boundary of the capacity region for the
multi-antenna Gaussian multiple-access channel (MAC), which is a
convex set. This method was later applied to characterize the Pareto
boundary of non-convex rate regions for the MISO-IC in \cite{Rui10}
and the two-way multi-antenna relay channel in \cite{Rui09}, for
which the WSR maximization approach is not directly applicable. A
very similar idea to rate profile was also proposed in
\cite{Shen05}, where the proportional rate fairness is imposed as a
constraint for WSR maximization in multi-user OFDM systems. As for
the outer polyblock approximation algorithm, it was first proposed
in \cite{Rubinov01}, and later applied in \cite{Ping09} and
\cite{Jorswieck10} to solve the WSR maximization problems for the
GIC. In \cite{Ping09}, this algorithm was applied for SISO-IC
together with the generalized linear fractional programming, which,
however, cannot be extended to SIMO-IC or MISO-IC. In
\cite{Jorswieck10}, this algorithm was applied to the two-user
MISO-IC by exploiting a prior result in \cite{Jorswieck08} that the
optimal transmit beamforming vector to achieve any Pareto boundary
rate-pair can be expressed as a linear combination of the
zero-forcing (ZF) and maximum-ratio transmission (MRT) beamformers.
However, this result only holds for the two-user MISO-IC and thus
how to extend the algorithm in \cite{Jorswieck10} to MISO-IC with
more than two users remains unknown. In comparison, in this paper we
show that by jointly utilizing the outer polyblock approximation
algorithm and rate profile approach, the global optimality of the
WSR maximization problem can be achieved for all SISO-IC, SIMO-IC
and MISO-IC, with arbitrary number of users.

It is also worth noting that for the WSR maximization in SISO-IC,
besides \cite{Ping09} that applies the outer polyblock approximation
algorithm, there have been other algorithms developed based on the
branch and bound method. For example, in \cite{Xu08} and
\cite{Weber10}, branch and bound methods combined with difference of
convex functions (DC) programming have been proposed. A generalized
branch and bound method applicable to problems in which the
objective function cannot be expressed in the form of DC, has also
been proposed in \cite{Weeraddana11}. In this paper, we propose an
alternative approach to that in the above prior work, whereby the
WSR maximization problems for SISO-IC, SIMO-IC, and MISO-IC are all
solvable.

The rest of this paper is organized as follows. Section
\ref{sec:system model and problem formulation} introduces the system
models for various GICs including SISO-IC, SIMO-IC and MISO-IC, and
formulates their WSR maximization problems. Section
\ref{sec:proposed solution} presents a new framework to solve the
formulated problems based on monotonic optimization and rate profile
techniques. Section \ref{sec:SINR feasibility problem} completes the
proposed algorithms by addressing the solutions to various SINR
feasibility problems. Section \ref{sec:numerical results} provides
numerical examples to validate the proposed results. Finally,
Section \ref{sec:conclusion} concludes the paper.

{\it Notation}: Scalars are denoted by lower-case letters, vectors
denoted by bold-face lower-case letters, and matrices denoted by
bold-face upper-case letters. $\mv{I}$ and $\mv{0}$  denote an
identity matrix and an all-zero matrix, respectively, with
appropriate dimensions. For a square matrix $\mv{S}$, $\mv{S}^{-1}$
denotes its inverse (if $\mv{S}$ is full-rank). For a matrix
$\mv{M}$ of arbitrary size, $\mv{M}^{H}$ and $\mv{M}^{T}$ denote the
conjugate transpose and transpose of $\mv{M}$, respectively. ${\rm
Diag}(\mv{X}_1,\cdots,\mv{X}_K)$ denotes a block diagonal matrix
with the diagonal matrices given by $\mv{X}_1,\cdots,\mv{X}_K$. The
distribution of a CSCG random vector with mean vector $\mv{x}$ and
covariance matrix $\mv{\Sigma}$ is denoted by
$\mathcal{CN}(\mv{x},\mv{\Sigma})$; and $\sim$ stands for
``distributed as''. $\mathbb{C}^{x \times y}$ denotes the space of
$x\times y$ complex matrices. $\mathbb{R}$ denotes the real number
space, while $\mathbb{R}^{x}$ denotes the $x \times 1$ real vector
space and $\mathbb{R}^{x}_+$ denotes its non-negative orthants.
$\|\mv{x}\|$ denotes the Euclidean norm of a complex vector
$\mv{x}$. $\mv{e}_k$ denotes a vector with its $k$th component being
$1$, and all other components being $0$. For two real vectors
$\mv{x}$ and $\mv{y}$, $\mv{x}\geq \mv{y}$ means that $\mv{x}$ is
greater than or equal to $\mv{y}$ in a component-wise manner.

\section{System Model and Problem Formulation}\label{sec:system
model and problem formulation}

\begin{figure}
\begin{center}
\scalebox{0.4}{\includegraphics*[40pt,309pt][449pt,780pt]{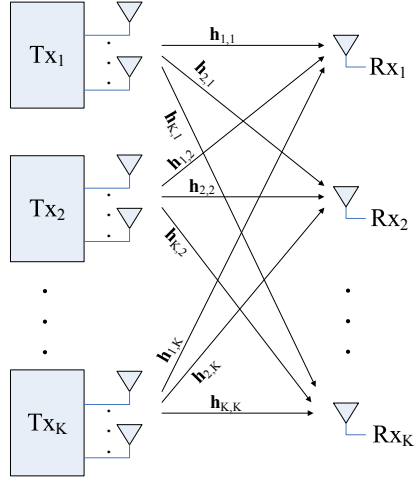}}
\end{center}
\caption{System model for the $K$-user MISO-IC (SISO-IC if each
transmitter has one single antenna, or SIMO-IC in the reverse link
transmission).}\label{fig1}
\end{figure}

We consider a $K$-user GIC, in which $K$ mutually interfering
wireless links communicate simultaneously over a common bandwidth,
as shown in Fig. \ref{fig1}. Firstly, consider the case where all
transmitters and receivers are each equipped with one single
antenna. The system is thus modeled as SISO-IC, for which the
discrete-time baseband signal received at the $k$th receiver is
given by\begin{align}\label{equ:received
signal}y_k=h_{k,k}\sqrt{p_k}x_k+\sum\limits_{j\neq
k}h_{k,j}\sqrt{p_j}x_j+z_k, ~~ k=1,\cdots,K,\end{align}where
$h_{k,j}$ denotes the complex channel gain from the $j$th
transmitter to the $k$th receiver, $p_k$ denotes the transmit power
of the $k$th transmitter, $x_k$ denotes the transmitted signal from
the $k$th transmitter, and $z_k$ denotes the background noise at the
$k$th receiver. It is assumed that
$z_k\sim\mathcal{CN}(0,\sigma_k^2)$, $\forall k$, and all $z_k$'s
are independent.

We assume independent encoding across different transmitters and
thus $x_k$'s are independent over $k$. It is also assumed that the
Gaussian codebook is used and thus $x_k\sim\mathcal{CN}(0,1)$.
Accordingly, the SINR of the $k$th receiver is expressed
as\begin{align}\label{equ:SINR for SISO}\gamma_k^{\rm
SISO-IC}=\frac{\|h_{k,k}\|^2p_k}{\sum\limits_{j\neq
k}\|h_{k,j}\|^2p_j+\sigma_k^2}.\end{align}

\begin{remark}It is worth noting that in the above signal
model, we have made the following two assumptions:
\begin{itemize}
\item[A1.] The interference is treated as additive Gaussian noise.
\item[A2.] The Gaussian input $x_k$ for user $k$ is assumed to be CSCG distributed and independent over time, i.e., asymmetric Gaussian signalling with time-domain symbol expansion in \cite{Jafar09} is not used.
\end{itemize}Note that for the subsequent studies on SIMO-IC and MISO-IC, the above two
assumptions are similarly made.
\end{remark}

Secondly, consider the case where all transmitters are each equipped
with one single antenna but each receiver is equipped with multiple
antennas, i.e., SIMO-IC. Assuming that the $k$th receiver is
equipped with $M_k>1$ antennas, its discrete-time baseband received
signal is given by\begin{align}\label{equ:received signal
uplink}y_k=\mv{w}_k^H(\mv{h}_{k,k}\sqrt{p_k}x_k+\sum\limits_{j\neq
k}\mv{h}_{k,j}\sqrt{p_j}x_j+\mv{z}_k), ~~
k=1,\cdots,K,\end{align}where $\mv{w}_k^H\in\mathbb{C}^{1\times
M_k}$ is the receive beamforming vector for the $k$th receiver,
$\mv{h}_{k,j}\in\mathbb{C}^{M_k\times 1}$ is the channel vector from
the $j$th transmitter to the $k$th receiver, and $\mv{z}_k\in
\mathbb{C}^{M_k\times 1}$ is the noise vector at the $k$th receiver.
It is assumed that $\mv{z}_k\sim
\mathcal{CN}(\mv{0},\sigma_k^2\mv{I})$. Thus, the SINR of the $k$th
receiver can be expressed as\begin{align}\label{equ:SINR in
SIMO-IC}\gamma_k^{\rm SIMO-IC}=\frac{p_k\|\mv{w}_k^H\mv{
h}_{k,k}\|^2}{\mv{w}_k^H(\sum\limits_{j\neq k}p_j\mv{h}_{k,j}\mv{
h}_{k,j}^H+\sigma_k^2\mv{I})\mv{w}_k}.\end{align}

Thirdly, consider the MISO-IC case in which all transmitters are
each equipped with multiple antennas while each receiver is equipped
with one single antenna. Assume that the $k$th transmitter is
equipped with $N_k>1$ antennas. The discrete-time baseband signal at
the $k$th receiver is then given by\begin{align}\label{equ:received
signal downlink}y_k=\mv{h}_{k,k}^H\mv{v}_kx_k+\sum\limits_{j\neq
k}\mv{h}_{k,j}^H\mv{v}_jx_j+z_k, ~~ k=1,\cdots,K,\end{align}where
$\mv{v}_k\in\mathbb{C}^{N_k\times 1}$ is the transmit beamforming
vector at the $k$th transmitter, and
$\mv{h}_{k,j}^H\in\mathbb{C}^{1\times N_j}$ denotes the channel
vector from the $j$th transmitter to the $k$th receiver.
Accordingly, the SINR of the $k$th receiver can be expressed
as\begin{align}\label{equ:SINR for MISO beamforming}\gamma_k^{\rm
MISO-IC}=\frac{\|\mv{h}_{k,k}^H \mv{v}_k\|^2}{\sum\limits_{j \neq
k}\|\mv{h}_{k,j}^H \mv{ v}_j\|^2+\sigma_k^2}.\end{align}

\begin{remark}
It is worth noting that for the above signal model for MISO-IC, we
assume that all transmitters employ rank-one beamforming. This is
because it has been shown in \cite{Rui10} and \cite{Shang09} that
under Assumptions A1 and A2, beamforming achieves all the points on
the Pareto boundary of the achievable rate region for MISO-IC, i.e.,
beamforming is optimal for WSR maximization in MISO-IC.
\end{remark}

With $\gamma_k$ defined in (\ref{equ:SINR for SISO}), (\ref{equ:SINR
in SIMO-IC}) or (\ref{equ:SINR for MISO beamforming}), the
achievable rate of the $k$th receiver can be formulated as
\begin{align}\label{equ:achievable rate}R_k(\gamma_k)=\log_2(1+\gamma_k), ~~ k=1,\cdots,K.\end{align}

Next, we define the achievable rate region for each type of GIC,
which constitutes all the rate-tuples simultaneously achievable by
all the users under a given set of transmit-power constraints
denoted by $P_1^{{\rm max}},\cdots,P_K^{{\rm max}}$:\begin{eqnarray}
&& \mathcal{R}^{{\rm
SISO-IC}}\triangleq\bigcup\limits_{\left\{p_k\right\}: 0\leq p_k\leq
P_k^{{\rm max}}, \ \forall k}\bigg \{ (r_1,\ldots,r_K):0\leq r_k\leq
R_k(\gamma_k^{{\rm SISO-IC}}), k=1,\ldots,K
\bigg\},\label{equ:rate region SISO} \\
&& \mathcal{R}^{{\rm SIMO-IC}}
\triangleq\bigcup\limits_{\left\{p_k\right\},\{\mv{w}_k\}: 0\leq
p_k\leq P_k^{{\rm max}}, \ \forall k}\bigg \{ (r_1,\ldots,r_K):0\leq
r_k\leq R_k(\gamma_k^{{\rm SIMO-IC}}), k=1,\ldots,K \bigg\},
\label{equ:rate region SIMO}
\\ && \mathcal{R}^{{\rm
MISO-IC}}\triangleq\bigcup\limits_{\left\{\mv{v}_k\right\}:0\leq
\|\mv{v}_k\|^2\leq P_k^{{\rm max}}, \ \forall k}\bigg \{
(r_1,\ldots,r_K):0\leq r_k\leq R_k(\gamma_k^{{\rm MISO-IC}}),
k=1,\ldots,K \bigg\}.\label{equ:rate region MISO}
\end{eqnarray}The upper-right boundary of each defined rate region is called the {\it Pareto
boundary}, constituted by rate-tuples for each of which it is
impossible to improve one particular user's rate without decreasing
the rate of at least one of the other users.

The WSR maximization problems for SISO-IC, SIMO-IC and MISO-IC are
then formulated as (P1.1)-(P1.3) as follows.
\begin{align*}\mathrm{(P1.1)}:~\mathop{\mathtt{Maximize}}_{\mv{p}} & ~~~ U(\mv{p}):=\sum\limits_{k=1}^K\mu_kR_k(\gamma_k^{{\rm SISO-IC}}) \\
\mathtt {Subject \ to} & ~~~ 0\leq p_k\leq P_k^{{\rm max}}, ~
\forall k,
\end{align*}\vspace{-0.2in}
\begin{align*}\mathrm{(P1.2)}:~\mathop{\mathtt{Maximize}}_{\mv{W},\mv{p}} & ~~~ U(\mv{W},\mv{p}):=\sum\limits_{k=1}^K\mu_kR_k(\gamma_k^{{\rm SIMO-IC}}) \\
\mathtt {Subject \ to} & ~~~ 0\leq p_k\leq P_k^{{\rm max}}, ~
\forall k,
\end{align*}\vspace{-0.2in}
\begin{align*}\mathrm{(P1.3)}:~\mathop{\mathtt{Maximize}}_{\mv{V}} & ~~~ U(\mv{V}):=\sum\limits_{k=1}^K\mu_kR_k(\gamma_k^{{\rm MISO-IC}}) \\
\mathtt {Subject \ to} & ~~~ \|\mv{v}_k\|^2\leq P_k^{{\rm max}},
\forall k,
\end{align*}where $\mv{p}=(p_1,\cdots,p_K)$ denotes the transmit power
vector, $\mv{W}=(\mv{w}_1,\cdots,\mv{w}_K)$ and
$\mv{V}=(\mv{v}_1,\cdots,\mv{v}_K)$ constitute the receive and
transmit beamforming vectors, respectively, and $\mu_k$ is the
non-negative rate weight for user $k$. Since the objective functions
are all non-concave with respect to the power values or beamforming
vectors due to the coupled interference, all the WSR maximization
problems in (P1.1)-(P1.3) are non-convex and thus cannot be solved
globally optimally by conventional convex optimization techniques.

\section{Proposed Solution Based on Outer Polyblock Approximation and Rate
Profile}\label{sec:proposed solution}

In this section, we solve the formulated WSR maximization problems
in (P1.1)-(P1.3) globally optimally by a new approach based on the
outer polyblock approximation and rate profile techniques.

\subsection{A New Look at the Problem: Optimizing WSR Directly in Rate Region}

Traditionally, Problems (P1.1)-(P1.3) are solved in the power
allocation and/or beamforming domain, which results in non-convex
optimization problems. In this subsection, we study the WSR
maximization problem utilizing a new formulation, which maximizes
the WSR directly in the achievable rate region.

If the achievable rate vector $\mv{r}=(R_1,\cdots,R_K)$ is treated
as the design variable, where $R_k$ is the achievable rate of user
$k$ defined in (\ref{equ:achievable rate}), the WSR maximization
problems (P1.1)-(P1.3) can be unified in the following
form.\begin{align*}\mathrm{(P2)}:~\mathop{\mathtt{Maximize}}_{\mv{r}} & ~~~ U(\mv{r}):=\sum\limits_{k=1}^K\mu_kR_k \\
\mathtt {Subject \ to} & ~~~ \mv{r}\in \mathcal{R},
\end{align*}where the
rate region $\mathcal{R}$ is defined in (\ref{equ:rate region
SISO}), (\ref{equ:rate region SIMO}) or (\ref{equ:rate region MISO})
for SISO-IC, SIMO-IC or MISO-IC.

Next, we will show that Problem (P2) belongs to one special class of
optimization problems: monotonic optimization over a ``normal'' set.
Two useful definitions are given first as follows.

\begin{definition}A function $f:\mathbb{R}^n\rightarrow\mathbb{R}$
is said to be strictly increasing on $\mathbb{R}_+^n$ if for any
$\mv{x}',\mv{x}\in \mathbb{R}_+^n$, $\mv{x}'\geq \mv{x}$ and
$\mv{x}'\neq \mv{x}$ imply that $f(\mv{x}')>
f(\mv{x})$.\end{definition}

\begin{definition}A set $\mathcal{D}\in \mathbb{R}_+^n$ is called normal if
given any point $\mv{x}\in \mathcal{D}$, all the points $\mv{x}'$
with $\mv{0}\leq \mv{x}'\leq \mv{x}$ satisfy that $\mv{x}'\in
\mathcal{D}$.\end{definition}

Based on the above definitions, we declare the following two facts
regarding Problem (P2), which can be easily verified to be true.

\begin{fact}\label{fact:objective} The objective function of Problem (P2) is a
strictly increasing function with respect to $\mv{r}$.
\end{fact}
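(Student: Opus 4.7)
The plan is essentially a direct verification against Definition 1, since the objective $U(\mathbf{r}) = \sum_{k=1}^{K} \mu_k R_k$ is an affine function of $\mathbf{r}$ with coefficients $\mu_k \geq 0$. First I would fix two arbitrary rate vectors $\mathbf{r}, \mathbf{r}' \in \mathbb{R}_+^{K}$ satisfying the hypothesis of Definition 1, namely $\mathbf{r}' \geq \mathbf{r}$ componentwise and $\mathbf{r}' \neq \mathbf{r}$. From these two conditions, the existence of at least one index $k^{\star} \in \{1,\ldots,K\}$ with $R'_{k^{\star}} > R_{k^{\star}}$ is immediate, while $R'_k \geq R_k$ holds for every remaining index.

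Next I would form the difference
\begin{align*}
U(\mathbf{r}') - U(\mathbf{r}) = \sum_{k=1}^{K} \mu_k (R'_k - R_k) \geq \mu_{k^{\star}}(R'_{k^{\star}} - R_{k^{\star}}),
\end{align*}
where the inequality uses $\mu_k \geq 0$ together with $R'_k - R_k \geq 0$ for $k \neq k^{\star}$. Assuming the rate weights are strictly positive (i.e., $\mu_k > 0$ for all $k$, which is a natural and standing assumption in WSR problems, since any user with weight zero can be dropped from the formulation without loss of generality), the right-hand side is the product of two strictly positive quantities, hence $U(\mathbf{r}') > U(\mathbf{r})$. This matches Definition 1 exactly and establishes the fact.

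There is really no obstacle here: the result is nothing more than the elementary observation that a non-negative linear combination with strictly positive coefficients is strictly monotone on $\mathbb{R}_+^K$. The only nuance worth flagging in the write-up is the treatment of the weights, which the introduction describes as ``non-negative''; for Definition 1 to apply verbatim one needs $\mu_k > 0$, so I would either state this explicitly or remark that zero-weight users contribute nothing to the optimization and are therefore harmlessly removed. With that convention in place, the proof is a one-line computation.
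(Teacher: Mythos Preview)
Your proposal is correct and matches what the paper has in mind: the paper does not give an explicit proof of this fact at all, merely stating that it ``can be easily verified to be true,'' and your direct verification against Definition~3.1 is precisely that easy verification. Your observation about the weights is well taken---the paper only assumes $\mu_k$ is ``non-negative,'' whereas strict monotonicity in the sense of Definition~3.1 requires $\mu_k>0$; your remedy of dropping zero-weight users from the formulation is the standard and correct way to handle this.
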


\begin{fact}\label{fact:constraint} The achievable rate region defined in (\ref{equ:rate region
SISO}), (\ref{equ:rate region SIMO}) or (\ref{equ:rate region MISO})
is a normal set.
\end{fact}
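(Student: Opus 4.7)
The plan is to verify normality directly from the definitions in equations (\ref{equ:rate region SISO}), (\ref{equ:rate region SIMO}) and (\ref{equ:rate region MISO}), which essentially bake downward-closedness into the construction of the rate regions. Each of the three regions is written as a union, indexed over admissible power vectors (together with receive or transmit beamformers in the SIMO-IC and MISO-IC cases), of ``boxes'' of the form $\{(r_1,\ldots,r_K):0\le r_k\le R_k(\gamma_k),\ k=1,\ldots,K\}$. I would exploit precisely this box structure, treating all three cases in one sweep.

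First, I would fix an arbitrary $\mv{r}=(r_1,\ldots,r_K)\in\mathcal{R}$ and unpack the union to extract one witness: an admissible configuration of transmit powers $\{p_k\}$ (and, when applicable, beamformers $\{\mv{w}_k\}$ or $\{\mv{v}_k\}$) whose induced SINRs $\gamma_k$ satisfy $0\le r_k\le R_k(\gamma_k)$ for every $k$. Second, for an arbitrary $\mv{r}'=(r_1',\ldots,r_K')$ with $\mv{0}\le\mv{r}'\le\mv{r}$, I would simply chain the inequalities componentwise: $0\le r_k'\le r_k\le R_k(\gamma_k)$ for each $k$. Hence $\mv{r}'$ lies in the very same box associated with the retained witness configuration, which means $\mv{r}'\in\mathcal{R}$. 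This is exactly the defining property of a normal set in Definition 2, and the argument is identical for SISO-IC, SIMO-IC, and MISO-IC because their rate regions all share the union-of-inequality-boxes form.

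I do not foresee any genuine obstacle. The substantive difficulty would have appeared only if the regions had been defined via equalities $r_k=R_k(\gamma_k)$ on the Pareto face; then the downward direction would require an additional operational argument (power scaling, time sharing, or the addition of dummy noise at the receivers) to realize strictly smaller rate tuples. By using inequality-defined inner sets from the outset, the authors have absorbed that step into the definition, which is precisely why Fact \ref{fact:constraint} is ``easily verified'', and the only point worth flagging in the write-up is this definitional choice. Combined with Fact \ref{fact:objective}, whose proof is an immediate consequence of the non-negativity (and strict positivity in at least one coordinate) of the weights $\mu_k$ in the sum $\sum_k \mu_k R_k$, this places Problem (P2) squarely within the class of monotonic optimization over a normal set that is tackled in the remainder of Section \ref{sec:proposed solution}.
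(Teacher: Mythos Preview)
Your argument for Fact \ref{fact:constraint} is correct and is precisely the direct verification the paper leaves to the reader (the paper offers no proof beyond stating that the fact ``can be easily verified to be true''). The union-of-boxes observation is exactly the right mechanism, and your remark that the inequality-based definition of the inner sets is what makes the verification immediate is on point.

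One small slip in your closing remark about Fact \ref{fact:objective}: strict positivity of the weights in \emph{at least one} coordinate is not enough for $U(\mv{r})=\sum_k\mu_kR_k$ to be strictly increasing in the sense of Definition~3.1. If some $\mu_j=0$, then increasing only $R_j$ leaves $U$ unchanged, violating the definition. You need $\mu_k>0$ for every $k$ (or else restrict attention to the active coordinates). The paper's assumption that the $\mu_k$ are merely ``non-negative'' is a minor imprecision in the paper itself, but you should not replicate it in your write-up.
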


Facts \ref{fact:objective} and \ref{fact:constraint} imply that
Problem (P2) maximizes a strictly increasing function over a normal
set. In \cite{Rubinov01}, the ``outer polyblock approximation''
algorithm was proposed to achieve the global optimality for this
type of problems. In the following, we will apply this algorithm to
solve Problem (P2).

\subsection{Outer Polyblock Approximation
Algorithm}\label{subsec:algorithm}

In this subsection, we introduce the outer polyblock approximation
algorithm to solve Problem (P2). First, two definitions are given as
follows.

\begin{definition}Given any vector $\mv{v}\in\mathbb{R}_+^n$, the
hyper rectangle $[\mv{0},\mv{v}]=\{\mv{x}|\mv{0}\leq \mv{x} \leq
\mv{v}\}$ is referred to as a box with vertex $\mv{v}$.
\end{definition}

\begin{definition}A set is called a polyblock if it is the union of
a finite number of boxes.
\end{definition}

Next, we show one important property of the polyblock in the
following proposition.

\begin{proposition}\label{pro:optimality}The maximum of
a strictly increasing function $f(\mv{x})$ over a polyblock is
achieved at one of the vertices of the polyblock.
\end{proposition}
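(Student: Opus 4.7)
The plan is to reduce the claim to a one-box argument and then exploit the finite-union structure of a polyblock. First, I would treat a single box $[\mv{0},\mv{v}]$: by construction, every $\mv{x}\in[\mv{0},\mv{v}]$ satisfies $\mv{0}\leq \mv{x}\leq \mv{v}$. Since $f$ is strictly increasing on $\mathbb{R}_+^n$, the relation $\mv{v}\geq \mv{x}$ forces $f(\mv{v})\geq f(\mv{x})$, with strict inequality unless $\mv{v}=\mv{x}$. (Strictly increasing in the sense of the definition gives weak monotonicity for free, because $\mv{v}=\mv{x}$ trivially yields $f(\mv{v})=f(\mv{x})$, and $\mv{v}\neq \mv{x}$ is exactly the hypothesis of Definition~1.2.) Consequently, $\max_{\mv{x}\in[\mv{0},\mv{v}]}f(\mv{x})=f(\mv{v})$, and the maximum over a single box is achieved at its vertex.

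Next, I would extend this to a general polyblock. By Definition~1.4, there exist vertices $\mv{v}_1,\dots,\mv{v}_N\in\mathbb{R}_+^n$ such that the polyblock equals $\mathcal{P}=\bigcup_{i=1}^{N}[\mv{0},\mv{v}_i]$. For any $\mv{x}\in\mathcal{P}$, pick an index $i(\mv{x})$ with $\mv{x}\in[\mv{0},\mv{v}_{i(\mv{x})}]$; the previous step gives $f(\mv{x})\leq f(\mv{v}_{i(\mv{x})})\leq \max_{1\leq j\leq N}f(\mv{v}_j)$. Conversely, every vertex $\mv{v}_j$ itself lies in $\mathcal{P}$, so the upper bound $\max_j f(\mv{v}_j)$ is attained in $\mathcal{P}$. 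Therefore $\max_{\mv{x}\in\mathcal{P}}f(\mv{x})=\max_{1\leq j\leq N}f(\mv{v}_j)$, and any maximizing index $j^\star$ yields a polyblock vertex $\mv{v}_{j^\star}$ at which the global maximum is achieved.

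I do not anticipate a real obstacle: the result is essentially a direct consequence of monotonicity combined with the fact that a finite union admits a ``per-piece'' maximization. The only subtlety worth flagging is the transition from strict monotonicity (as stated in Definition~1.2) to the weak inequality $f(\mv{v})\geq f(\mv{x})$ actually used in the argument, which I would write out explicitly to avoid confusion. Everything else is a one-line consequence of the definitions of a box and a polyblock.
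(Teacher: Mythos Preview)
Your proposal is correct and rests on the same idea as the paper's proof: every point of a polyblock is componentwise dominated by some vertex, so strict monotonicity forces the maximum to be attained at a vertex. The paper packages this as a two-line contradiction (assume the maximizer is not a vertex, then a dominating vertex has strictly larger $f$-value), whereas you give the equivalent direct argument via a single-box step followed by a finite-union step; the mathematical content is identical.
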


\begin{proof}
Suppose that $\mv{x}^\ast$ is the globally optimal solution over the
polyblock, and it is not a vertex. Then, there exists at least one
vertex $\mv{x}'$ satisfying $\mv{x}'\geq \mv{x}^\ast$ but
$\mv{x}'\neq \mv{x}^\ast$. Since $f(\mv{x})$ is a strictly
increasing function, $f(\mv{x}^\ast)<f(\mv{x}')$ must hold, which
contradicts to the presumption. The proof is thus completed.
\end{proof}

According to Proposition \ref{pro:optimality}, the maximum of an
increasing function over a polyblock can be obtained efficiently by
enumeration of the vertices of that polyblock. Consequently, we can
construct a sequence of polyblocks to approximate the rate region
$\mathcal{R}$ with the increasing accuracy for Problem (P2). In
other words, we need to find an iterative method to generate a
sequence of polyblocks of shrinking sizes such that
\begin{eqnarray}& P^{(1)}\supset P^{(2)}\supset\cdots\supset \mathcal{R}, \label{equ:contain} \\ & \lim\limits_{n \to \infty}[\max\limits_{\mv{r}\in P^{(n)}}U(\mv{r})]=\max \limits_{\mv{r} \in \mathcal{R}}U(\mv{r}),  \label{equ:optimality}\end{eqnarray}where
$P^{(n)}$ denotes the polyblock generated at the $n$th iteration.

Next, we present one method to generate the polyblocks satisfying
(\ref{equ:contain}) and (\ref{equ:optimality}). Let
$\mathcal{Z}^{(n)}$ denote the set containing all the vertices of
the polyblock $P^{(n)}$. The vertex that achieves the maximum WSR in
polyblock $P^{(n)}$ can be formulated
as\begin{align}\label{equ:optimal vertex}\tilde{\mv{z}}^{(n)}=\arg
\max \limits_{\mv{z}\in \mathcal{Z}^{(n)}}U(\mv{z}).\end{align}

\begin{figure}
\begin{center}
\scalebox{0.7}{\includegraphics*[8pt,235pt][485pt,658pt]{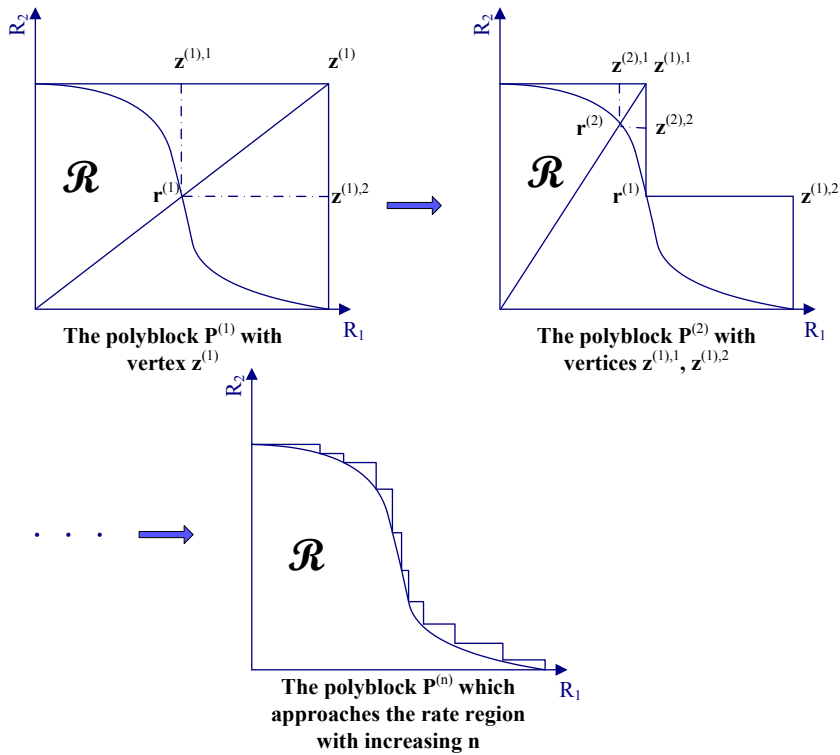}}
\end{center}
\caption{Illustration of the procedure for constructing new
polyblocks.}\label{fig2}
\end{figure}

Define $\delta \tilde{\mv{z}}^{(n)}$ as the line that connects the
two points $\mv{0}$ and $\tilde{\mv{z}}^{(n)}$, and $\mv{r}^{(n)}$
as the intersection point on the Pareto boundary with the line
$\delta \tilde{\mv{z}}^{(n)}$. The following method can be used to
generate $K$ new vertices adjacent to $\tilde{\mv{z}}^{(n)}$:
\begin{align}\label{equ:new vertex}
\mv{z}^{(n),i}=\tilde{\mv{z}}^{(n)}-(\tilde{z}_i^{(n)}-r_i^{(n)})\mv
e_i, ~~ i=1,\cdots,K,\end{align}where $\mv{z}^{(n),i}$ denotes the
$i$th new vertex generated at the $n$th iteration;
$\tilde{z}_i^{(n)}$ and $r_i^{(n)}$ denote the $i$th element of
vectors $\tilde{\mv{z}}^{(n)}$ and $\mv{r}^{(n)}$, respectively.
Then, the new vertex set can be expressed as
\begin{align}\label{equ:new
set}\mathcal{Z}^{(n+1)}=\mathcal{Z}^{(n)}\backslash
\tilde{\mv{z}}^{(n)}\bigcup \{\mv{z}^{(n),1},\cdots,\mv{
z}^{(n),K}\}.\end{align}Each vertex in the set $\mathcal{Z}^{(n+1)}$
defines a box, and thus the new polyblock $P^{(n+1)}$ is the union
of all these boxes. An illustration about the above procedure to
generate polyblocks for the case of two-user rate region is given in
Fig. \ref{fig2}. In the following proposition, we show the
feasibility of the above polyblock generation method.

\begin{proposition}\label{pro:shrinking}If the rate region $\mathcal{R}$ is a normal set (as we have already shown),
the polyblocks generated by (\ref{equ:new set}) satisfy
(\ref{equ:contain}).
\end{proposition}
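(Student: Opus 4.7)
The plan is to establish (\ref{equ:contain}) by induction on $n$, handling its two clauses---$P^{(n+1)}\subseteq P^{(n)}$ and $\mathcal{R}\subseteq P^{(n+1)}$---separately. The base case is taken care of by the algorithm's initialization: $P^{(1)}$ is chosen as a single sufficiently large box (for instance, the box whose vertex is the vector of single-user rates achieved by each link transmitting at its full power with no interference), which trivially contains $\mathcal{R}$.

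For the nesting $P^{(n+1)}\subseteq P^{(n)}$, the crucial observation is that $\mv{r}^{(n)}$ lies on the segment from $\mv{0}$ to $\tilde{\mv{z}}^{(n)}$, so $0\leq r_i^{(n)}\leq \tilde{z}_i^{(n)}$ for every $i$. The update rule (\ref{equ:new vertex}) then gives $\mv{0}\leq \mv{z}^{(n),i}\leq \tilde{\mv{z}}^{(n)}$ componentwise, so each new box $[\mv{0},\mv{z}^{(n),i}]$ sits inside the old box $[\mv{0},\tilde{\mv{z}}^{(n)}]$. Since (\ref{equ:new set}) retains every vertex of $\mathcal{Z}^{(n)}$ apart from $\tilde{\mv{z}}^{(n)}$, whose box is replaced by the union of these $K$ smaller sub-boxes, it follows that $P^{(n+1)}\subseteq P^{(n)}$.

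For the containment $\mathcal{R}\subseteq P^{(n+1)}$, I would fix an arbitrary $\mv{r}\in\mathcal{R}$ and use the inductive hypothesis to place it in some box $[\mv{0},\mv{z}]$ with $\mv{z}\in\mathcal{Z}^{(n)}$. If $\mv{z}\neq \tilde{\mv{z}}^{(n)}$, then $\mv{z}\in\mathcal{Z}^{(n+1)}$ by (\ref{equ:new set}) and we are done. The interesting case is when $\mv{r}$ lies in $[\mv{0},\tilde{\mv{z}}^{(n)}]$; here I must exhibit an index $i$ with $\mv{r}\leq \mv{z}^{(n),i}$, which by (\ref{equ:new vertex}) reduces to finding $i$ such that $r_i\leq r_i^{(n)}$, because $\mv{z}^{(n),i}$ coincides with $\tilde{\mv{z}}^{(n)}$ in every component $j\neq i$.

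I expect this last step to be the main obstacle, and I plan to resolve it by contradiction. Were no such $i$ to exist, we would have $r_i>r_i^{(n)}$ for every $i$, so $\mv{r}\in\mathcal{R}$ would strictly dominate $\mv{r}^{(n)}$ coordinate-wise. But $\mv{r}^{(n)}$ was defined as the point at which the ray $\delta \tilde{\mv{z}}^{(n)}$ meets the Pareto boundary of $\mathcal{R}$, hence it is Pareto optimal and cannot be strictly dominated by any point of $\mathcal{R}$. Normality of $\mathcal{R}$ (Fact~\ref{fact:constraint}) is the property that underwrites this picture: it guarantees that the ray $\delta \tilde{\mv{z}}^{(n)}$ intersects the Pareto boundary in a well-defined point $\mv{r}^{(n)}$ and that the portion of $\mathcal{R}$ along the ray is precisely the segment $[\mv{0},\mv{r}^{(n)}]$. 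The resulting contradiction produces the desired index $i$ and closes the induction.
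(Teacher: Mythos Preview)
Your argument is correct and is essentially the standard proof from the monotonic optimization literature. The paper itself does not give a proof of this proposition at all; it simply writes ``Please refer to \cite{Rubinov01}.'' So there is no approach to compare against: you have supplied the self-contained argument that the paper outsources. The induction you outline---showing $P^{(n+1)}\subseteq P^{(n)}$ because each new vertex $\mv{z}^{(n),i}$ is dominated by $\tilde{\mv{z}}^{(n)}$, and showing $\mathcal{R}\subseteq P^{(n+1)}$ by the Pareto-optimality contradiction---is exactly how the result is established in the reference the paper cites. One small point worth making explicit in your write-up: the claim that $\mv{r}^{(n)}$ lies on the segment $[\mv{0},\tilde{\mv{z}}^{(n)}]$ (rather than beyond $\tilde{\mv{z}}^{(n)}$ on the ray) uses the inductive hypothesis $\mathcal{R}\subseteq P^{(n)}$ together with $\tilde{\mv{z}}^{(n)}\notin\mathcal{R}$; the latter holds whenever the algorithm has not yet terminated, since if $\tilde{\mv{z}}^{(n)}\in\mathcal{R}$ then $U(\tilde{\mv{z}}^{(n)})$ is simultaneously the upper and lower bound and the stopping criterion is met.
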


\begin{proof}Please refer to \cite{Rubinov01}.
\end{proof}

Proposition \ref{pro:shrinking} ensures that the above polyblock
generation method can be used to approximate the rate region from
the outside with increasing accuracy. Let
$\mv{r}^\ast=(R_1^\ast,\cdots,R_K^\ast)$ denote the optimal solution
to Problem (P2). Based on the above method, in the following we
present an algorithm to find $\mv{r}^\ast$ in the rate region
$\mathcal{R}$. It is worth noting that $\mv{r}^\ast$ must be on the
Pareto boundary of the rate region; thus, we only need to search
over the Pareto boundary to find $\mv{r}^\ast$.

The outer polyblock approximation algorithm works iteratively as
follows. In the $n$th iteration, the optimal vertex
$\tilde{\mv{z}}^{(n)}$ is first obtained by (\ref{equ:optimal
vertex}). According to Proposition \ref{pro:optimality}, in the
polyblock $P^{(n)}$ the maximum WSR is $U(\tilde{\mv{z}}^{(n)})$.
Since Proposition \ref{pro:shrinking} implies that $P^{(n)}$ always
contains the rate region $\mathcal{R}$, $U(\tilde{\mv{z}}^{(n)})$ is
an upper bound of $U(\mv{r}^\ast)$. Then, the intersection point
$\mv{r}^{(n)}$ on the Pareto boundary with the line $\delta
\tilde{\mv{z}}^{(n)}$ is obtained. Define the best intersection
point up to the $n$th iteration as\begin{align}\label{equ:best
intersection}\tilde{\mv{r}}^{(n)}=\arg
\max\{U(\mv{r}^{(n)}),U(\tilde{\mv{r}}^{(n-1)})\}.\end{align}Consequently,
$U(\tilde{\mv{r}}^{(n)})$ is the tightest lower bound of
$U(\mv{r}^\ast)$ by the $n$th iteration. Next, we can compute the
value of $U(\tilde{\mv{z}}^{(n)})-U(\tilde{\mv{r}}^{(n)})$, which is
the difference between the upper and lower bounds of the optimal
value of Problem (P2) at the $n$th iteration. If this difference is
less than $\eta$ (a small positive number), the algorithm can
terminate and $\tilde{\mv{r}}^{(n)}$ is at least an $\eta$-optimal
solution to Problem (P2) because
\begin{align}\label{equ:stopping criterion}U(\mv{r}^\ast)
-U(\tilde{\mv{r}}^{(n)})<U(\tilde{\mv{z}}^{(n)})-U(\tilde{\mv{r}}^{(n)})<
\eta.\end{align}Otherwise, we construct a new polyblock $P^{(n+1)}$
by the above polyblock generation method. We repeat the above
procedure until an $\eta$-optimal solution is found.

\begin{table}[ht]
\begin{center}
\caption{\textbf{Algorithm \ref{table1}}: Outer Polyblock
Approximation Algorithm for Solving Problem (P2)} \vspace{0.2cm}
 \hrule
\vspace{0.3cm}
\begin{itemize}
\item[a)] Initialize: Set $n=1$,
$\mathcal{Z}^{(1)}=\{\mv{z}^{(1)}\}$;
\item[b)] {\bf While} $(\epsilon,\eta)$-accuracy is not reached, {\bf do}
\begin{itemize}
\item[1)] Find the optimal vertex $\tilde{\mv z}^{(n)}$ that maximizes the
WSR in the set $\mathcal{Z}^{(n)}_\epsilon$ based on
\begin{align}\label{equ:optimal vertex1}\tilde{\mv{z}}^{(n)}=\arg
\max \limits_{\mv{z}\in
\mathcal{Z}^{(n)}_\epsilon}U(\mv{z}),\end{align}where $\epsilon$ is
a small positive number and $\mathcal{Z}^{(n)}_\epsilon=\{\mv{z}\in
\mathcal{Z}^{(n)}|z_k\geq \epsilon, \ \forall k \}$;
\item[2)] Compute the intersection point $\mv{r}^{(n)}$ on the Pareto boundary of the rate region $\mathcal{R}$ with the line $\delta \tilde{\mv{z}}^{(n)}$ ;
\item[3)] Update the best intersection point until the $n$th iteration $\tilde{\mv{r}}^{(n)}$ according to (\ref{equ:best
intersection});
\item[4)] {\bf If} $U(\tilde{\mv{z}}^{(n)})-U(\tilde{\mv{r}}^{(n)})\leq \eta$, {\bf then}
\begin{itemize}
\item[] Stop and $\tilde{\mv{r}}^{(n)}$ is an $(\epsilon,\eta)$-optimal solution
to Problem (P2);
\end{itemize}
\item[5)] {\bf else}
\begin{itemize}
\item[] Compute $K$ new vertices that are adjacent to $\tilde{\mv{z}}^{(n)}$ by
(\ref{equ:new vertex}) and update the vertex set
$\mathcal{Z}^{(n+1)}$ by (\ref{equ:new set});
\end{itemize}
\item[6)] {\bf end}
\item[7)] $n=n+1$;
\end{itemize}
\item[c)] {\bf end}
\end{itemize}
\vspace{0.2cm} \hrule \label{table1} \end{center}
\end{table}

The above algorithm, denoted as Algorithm \ref{table1}, is
summarized in Table \ref{table1}. It is worth noting that in
Algorithm \ref{table1}, $\tilde{\mv{z}}^{(n)}$ is obtained by
enumeration in the set $\mathcal{Z}^{(n)}_\epsilon$ rather than
$\mathcal{Z}^{(n)}$. This is because in \cite{Rubinov01} it was
shown that if the optimal solution lies in a strip defined by
$\{\mv{r}^\ast|0\leq R_k^\ast \leq \epsilon\}$ with arbitrary $k$
and a small value $\epsilon>0$, then as $\tilde{\mv{z}}^{(n)}$
approaches this strip, the algorithm converges very slowly.
Consequently, $\epsilon$ is chosen to balance the tradeoff between
the accuracy and complexity of Algorithm \ref{table1}. With
$\epsilon$, Algorithm \ref{table1}
solves the following problem\begin{align*}\mathrm{(P2-A)}:~\mathop{\mathtt{Maximize}}_{\mv{r}} & ~~~ U(\mv{r}):=\sum\limits_{k=1}^K\mu_kR_k \\
\mathtt {Subject \ to} & ~~~ \mv{r}\in \mathcal{R}^\epsilon,
\end{align*}where $\mathcal{R}^\epsilon$ is defined
as\begin{align}\mathcal{R}^\epsilon=\mathcal{R}\cap
\{(r_1,\cdots,r_K):r_k\geq \epsilon, \forall k\}.\end{align}Thus,
the corresponding solution is called an $(\epsilon,\eta)$-optimal
solution to Problem P2.

Next, we address the convergence issue of Algorithm \ref{table1}.
According to Proposition \ref{pro:shrinking}, $P^{(n)}\supset
P^{(n+1)}$ always holds. Moreover, the optimal vertex
$\tilde{\mv{z}}^{(n)}$ is removed from
$\mathcal{Z}^{(n+1)}_\epsilon$ after each iteration. Thus,
$U(\tilde{\mv{z}}^{(n+1)})<U(\tilde{\mv{z}}^{(n)})$ also holds.
Furthermore, the lower bound $U(\tilde{\mv{r}}^{(n)})$ is
non-decreasing. Consequently, the value of
$U(\tilde{\mv{z}}^{(n)})-U(\tilde{\mv{r}}^{(n)})$ will decrease
after each iteration. It was shown in \cite{Rubinov01} that as $n$
increases, the difference between the upper and lower bounds can be
reduced to an arbitrary small positive number in a finite number of
iterations. Thus, Algorithm \ref{table1} converges given small
positive values $\epsilon$ and $\eta$. More details about the
selection of the values of $\epsilon$ and $\eta$ will be given later
in Section \ref{subsec:convergence performance}.

Last, we explain how to obtain an initial vertex
$\mv{z}^{(1)}=(z_1^{(1)},\cdots,z_K^{(1)})$ for the first iteration
of Algorithm \ref{table1}. Since the box $[\mv{0},\mv{z}^{(1)}]$
needs to contain the rate region $\mathcal{R}$, for any user $k$,
$z_k^{(1)}$ can be obtained when all other users switch off their
transmission (thus no interference exists for user $k$), and user
$k$ transmits its maximum power $P_k^{{\rm max}}$. More
specifically, for SISO-IC,
\begin{align}\label{equ:vertex}z_k^{(1)}=\log_2(1+\frac{P_k^{{\rm
max}}\|h_{k,k}\|^2}{\sigma_k^2}), ~ \forall k.
\end{align}

Since for MISO-IC,
\begin{align}
\gamma_k^{{\rm MISO-IC}} & =\frac{\|\mv{
h}_{k,k}^H\mv{w}_k\|^2}{\sum\limits_{j\neq
k}\|\mv{h}_{k,j}^H\mv{w}_j\|^2+\sigma_k^2}
<\frac{\|\mv{h}_{k,k}^H\mv{w}_k\|^2}{\sigma_k^2} \nonumber
\\ & \overset{(a)}{\leq}
\frac{\|\mv{w}_k\|^2\|\mv{h}_{k,k}\|^2}{\sigma_k^2}\leq
\frac{P_k^{{\rm max}}\|\mv{h}_{k,k}\|^2}{\sigma_k^2}, \ \forall
k,\label{equ:SINR in MISO-BC relaxation}\end{align}where (a) is due
to the Cauchy-Schwarz inequality, $z_k^{(1)}$ can thus be set as
\begin{align}\label{equ:first vertex MISO-IC}z_k^{(1)}=\log_2(1+\frac{P_k^{{\rm max}}\|\mv{
h}_{k,k}\|^2}{\sigma_k^2}), ~ \forall k.\end{align}

The initial vertex for SIMO-IC can be obtained similarly to
(\ref{equ:first vertex MISO-IC}), and is thus omitted for brevity.

To summarize, the only challenge that remains unaddressed in
Algorithm \ref{table1} is on how to compute the intersection point
$\mv{r}^{(n)}$ on the Pareto rate boundary with the line $\delta
\tilde{\mv{z}}^{(n)}$ at the $n$th iteration, which will be
addressed next.

\subsection{Finding Intersection Points by the ``Rate Profile'' Approach}

In this subsection, we show how to obtain the intersection point on
the Pareto boundary of the rate region with the line $\delta
\tilde{\mv{z}}^{(n)}$, to complete Algorithm \ref{table1}. Let
$\footnotesize{R_{{\rm sum}}=\sum\limits_{k=1}^{K}R_k}$ denote the
sum-rate of all the users, $\footnotesize{\mv
\alpha=\tilde{\mv{z}}^{(n)}/\sum\limits_{k=1}^K\tilde{z}_k^{(n)}}$
denote the slope of the line $\delta \tilde{\mv{z}}^{(n)}$.
Consequently, the intersection point at the $n$th iteration can be
expressed as $\mv{r}^{(n)}=R_{{\rm sum}}^\ast \mv \alpha$, where
$R_{{\rm sum}}^\ast$ is the optimal value of the following
problem:\begin{align}\label{equ:intersection point in one
direction}\mathop{\mathtt{Maximize}} & ~~~ R_{{\rm sum}} \nonumber\\
\mathtt{Subject \ to} &~ ~~ R_{{\rm sum}} \mv \alpha \in
\mathcal{R}.\end{align}

The above approach to find the intersection point on the Pareto
boundary of the rate region is known as {\it rate profile}
\cite{Rui10,Mohseni06,Rui09}. In the following, we solve Problem
(\ref{equ:intersection point in one direction}) to obtain the
intersection point $\mv{r}^{(n)}$ on the Pareto boundary with a
given $\delta\tilde{\mv{z}}^{(n)}$.

Problem (\ref{equ:intersection point in one direction}) is solvable
via solving a sequence of feasibility problems shown as follows.
Given a target sum-rate $\bar{R}_{{\rm sum}}$, the feasibility
problems for SISO-IC, SIMO-IC and MISO-IC can be expressed in the
following problems (P3.1)-(P3.3), respectively.
\begin{align*}\mathrm{(P3.1)}:~\mathop{\mathtt{Find}} & ~~\{p_k\} \\
\mathtt{Subject \ to} &
~~\log_2(1+\gamma_k^{{\rm SISO-IC}})\geq \alpha_k\bar{R}_{{\rm sum}}, ~ \forall k \\
& ~~p_k\leq P_k^{{\rm max}}, ~~ \forall k.\end{align*}
\begin{align*}\mathrm{(P3.2)}:~\mathop{\mathtt{Find}} &
~~\{\mv{w}_k\}, ~ \{p_k\} \nonumber\\ \mathtt{Subject \ to} &
~~\log_2(1+\gamma_k^{{\rm SIMO-IC}})\geq \alpha_k\bar{R}_{{\rm sum}}, ~ \forall k \\
& ~~p_k\leq P_k^{{\rm max}}, ~~ \forall k.\end{align*}
\begin{align*}\mathrm{(P3.3)}:~\mathop{\mathtt{Find}} & ~~\{\mv{v}_k\}
\nonumber\\ \mathtt{Subject \ to} &
~~\log_2(1+\gamma_k^{{\rm MISO-IC}})\geq \alpha_k\bar{R}_{{\rm sum}}, ~ \forall k \\
& ~~\|\mv{v}_k\|^2\leq P_k^{{\rm max}}, ~ \forall k.
\end{align*}

If any of Problems (P3.1), (P3.2) and (P3.3) is feasible, it follows
that $R_{{\rm sum}}^\ast\geq \bar{R}_{{\rm sum}}$; otherwise,
$R_{{\rm sum}}^\ast < \bar{R}_{{\rm sum}}$. Hence, $R_{{\rm
sum}}^\ast$ can be obtained for Problem (\ref{equ:intersection point
in one direction}) by applying a simple bisection method
\cite{Boyd04}, for which the detail is omitted for brevity.

The remaining challenge is on solving the feasibility problems
(P3.1)-(P3.3), which is addressed next. Let
$\bar{\gamma}_k=2^{\alpha_k \bar{R}_{{\rm sum}}}-1$, $\forall k$.
Then, the first constraint of each feasibility problem can be
re-expressed as
\begin{align}\label{equ:sinr feasibility}\gamma_k \geq \bar{\gamma}_k, ~~ \forall k.\end{align}Therefore, given any sum-rate target $\bar{R}_{{\rm sum}}$,
the feasibility problems (P3.1)-(P3.3) are equivalent to finding
whether a corresponding SINR target vector $\bar{\mv
\gamma}=(\bar{\gamma}_1,\cdots,\bar{\gamma}_K)$ is achievable. In
the next section, we will propose efficient algorithms to solve
these SINR feasibility problems.

\begin{remark}In the case where a set of minimum rate constraints $\label{equ:rate constraints}R_k\geq R_k^{{\rm min}}, ~
\forall k$, are added to the WSR maximization problem (P2), where
$R_k^{{\rm min}}$ is the minimum rate required for user $k$, we can
solve this new problem by modifying Algorithm \ref{table1} as
follows. Since the new rate region $\mathcal{R}'$ is the
intersection of the original rate region with the set
$\{(r_1,\cdots,r_K):r_k\geq R_k^{{\rm min}}, \ \forall k\}$, we
should change the initial point from $\mv{0}$ to $\mv{r}^{{\rm
min}}$ in Algorithm \ref{table1}, where $\mv{r}^{{\rm
min}}=(R_1^{{\rm min}},\cdots,R_K^{{\rm min}})$ is the rate
constraint vector. Thus, at each iteration we need to compute the
intersection point on the Pareto boundary with the line passing
through the optimal vertex $\tilde{\mv{z}}^{(n)}$ and the point
$\mv{r}^{{\rm min}}$ (instead of $\mv{0}$ in Algorithm
\ref{table1}). In addition, any point $\mv{r}$ on this line with
$R_{{\rm sum}}=\sum_{k=1}^KR_k$ can be rewritten
as\begin{align}\mv{r}=\mv{r}^{{\rm min}}+\mv \alpha(R_{{\rm
sum}}-\sum_{k=1}^KR_k^{{\rm min}}),\end{align}where the rate profile
$\mv \alpha$ is obtained by $\mv
\alpha=\frac{\tilde{\mv{z}}^{(n)}-\mv{r}^{{\rm
min}}}{\sum_{k=1}^K\tilde{z}_k^{(n)}-\sum_{k=1}^KR_k^{{\rm min}}}$.

\end{remark}

\section{Solutions to SINR Feasibility
Problems}\label{sec:SINR feasibility problem}

In this section, we solve Problems (P3.1)-(P3.3) subject to the
equivalent SINR constraints given in (\ref{equ:sinr feasibility})
for SISO-IC, SIMO-IC and MISO-IC, respectively.

\subsection{The
SISO-IC Case}\label{sec:SISO}

We first study the feasibility problem (P3.1) for SISO-IC. Given a
SINR target vector $\bar{\mv
\gamma}=(\bar{\gamma}_1,\cdots,\bar{\gamma}_K)$ with
$\bar{\gamma}_k=2^{\alpha_k \bar{R}_{{\rm sum}}}-1$, we will check
whether it is achievable under users' individual power constraints.

Let $\mv{G}$ denote the $K \times K$ normalized channel gain matrix
given by
\begin{align}G_{k,j}=\left\{
\begin{array}{ll} \frac{\bar{\gamma}_k\|h_{k,j}\|^2}{\|h_{k,k}\|^2}, &
k\neq j \\ 0, & k=j , \end{array} \right.\end{align}and $\mv \eta$
denote the $K \times 1$ normalized noise vector given
by\begin{equation}
\eta_k=\frac{\bar{\gamma}_k\sigma_k^2}{\|h_{k,k}\|^2}, ~~ \forall
k.\end{equation}To achieve the SINR target, the transmit power
vector for users is given by
\begin{equation}\label{equ:power control}\mv{p}=(\mv{I}-\mv{
G})^{-1}\mv \eta.\end{equation}

Let $\rho(\mv{B})$ denote the spectral radius (defined as the
maximum eigenvalue in absolute value) of the non-negative matrix
$\mv{B}$. The following propositions were shown in \cite{Bambos96},
which play important roles in solving Problem (P3.1).

\begin{proposition}\label{the:feasibility}The power allocation $\mv{p}$ given by (\ref{equ:power control}) satisfies $\mv{p}\geq \mv{0}$ if
and only if $\rho(\mv{G})<1$.
\end{proposition}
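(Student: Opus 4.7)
My plan is to prove Proposition 4.1 by combining the Neumann series representation of $(\mv{I}-\mv{G})^{-1}$ for the sufficiency direction with the Perron--Frobenius theorem for the necessity direction. The key structural facts I would use throughout are that $\mv{G}$ is entrywise non-negative, and that $\mv \eta$ is entrywise strictly positive (since each $\bar{\gamma}_k>0$, $\sigma_k^2>0$, and $\|h_{k,k}\|^2>0$).

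For the ``if'' direction, I would assume $\rho(\mv{G})<1$ and invoke the standard fact that in this regime $(\mv{I}-\mv{G})$ is invertible with the convergent expansion
\begin{equation*}
(\mv{I}-\mv{G})^{-1}=\sum_{n=0}^\infty \mv{G}^n.
\end{equation*}
Because $\mv{G}\geq\mv{0}$ componentwise, every power $\mv{G}^n$ is componentwise non-negative, so $(\mv{I}-\mv{G})^{-1}$ inherits this non-negativity entry by entry. Applying this non-negative matrix to the non-negative vector $\mv \eta$ immediately yields $\mv{p}=(\mv{I}-\mv{G})^{-1}\mv \eta \geq \mv{0}$.

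For the ``only if'' direction, I would argue by contrapositive: suppose $\rho(\mv{G})\geq 1$. If $\rho(\mv{G})=1$, then $1$ is an eigenvalue of $\mv{G}$, so $(\mv{I}-\mv{G})$ is singular and the stated $\mv{p}$ is not even well-defined. So assume $\rho(\mv{G})>1$ and that $\mv{p}\geq \mv{0}$ solves $(\mv{I}-\mv{G})\mv{p}=\mv \eta$. Since $\mv{G}\geq \mv{0}$, the Perron--Frobenius theorem guarantees a non-zero non-negative \emph{left} eigenvector $\mv{u}\geq \mv{0}$ with $\mv{u}^T\mv{G}=\rho(\mv{G})\,\mv{u}^T$. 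Left-multiplying the linear system by $\mv{u}^T$ produces
\begin{equation*}
\bigl(1-\rho(\mv{G})\bigr)\,\mv{u}^T\mv{p} \;=\; \mv{u}^T\mv \eta.
\end{equation*}
The right-hand side is strictly positive because $\mv \eta$ has all entries positive and $\mv{u}$ is a non-zero non-negative vector; the left-hand side is non-positive under our hypotheses. This contradiction establishes necessity.

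The main technical point I expect to be delicate is ensuring that the Perron--Frobenius step is applied cleanly without assuming irreducibility of $\mv{G}$. In the general (reducible) case the non-negative left eigenvector for $\rho(\mv{G})$ need not have all-positive entries, but the argument still works: strict positivity of $\mv \eta$ guarantees $\mv{u}^T\mv \eta>0$ for \emph{any} non-zero non-negative $\mv{u}$, which is all that the contradiction requires. Everything else reduces to standard matrix algebra, so I would keep the write-up short and simply cite the Perron--Frobenius and Neumann series facts as used in \cite{Bambos96}.
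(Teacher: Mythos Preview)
Your argument is correct and is the standard textbook proof of this fact. The paper does not actually prove Proposition~4.1 at all: it states the result and attributes it to \cite{Bambos96} without further detail. Your write-up therefore supplies what the paper omits, and the Neumann-series expansion for sufficiency together with the Perron--Frobenius left-eigenvector argument for necessity is precisely the classical route one finds in the cited power-control literature. Your remark about not needing irreducibility is well taken: strict positivity of $\mv\eta$ is exactly what makes the contradiction go through for any non-zero non-negative Perron vector, so nothing more than the general (reducible) Perron--Frobenius statement is required.
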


\begin{proposition}\label{the:power constraint}If $\rho(\mv{G})<1$, the power allocation $\mv{p}$ given by (\ref{equ:power control}) is component-wise minimum in the
sense that any other power allocation $\mv{p}'$ that satisfies
(\ref{equ:sinr feasibility}) needs to satisfy $\mv{p}'\geq \mv{p}$.
\end{proposition}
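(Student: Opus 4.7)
The plan is to compare the vector equation that defines $\mv{p}$ with the vector inequality implied by the feasibility of $\mv{p}'$, reduce their difference to $(\mv{I}-\mv{G})(\mv{p}'-\mv{p})\geq \mv{0}$, and then exploit the fact that under $\rho(\mv{G})<1$ the matrix $(\mv{I}-\mv{G})^{-1}$ is entrywise non-negative, so multiplying through preserves the componentwise inequality.

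First, I would rewrite the defining equation (\ref{equ:power control}) as $\mv{p}=\mv{G}\mv{p}+\mv \eta$, which is just the statement that the SINR constraints are met with equality by $\mv{p}$. Next, I would clear denominators in the SINR constraint $\gamma_k(\mv{p}') \geq \bar{\gamma}_k$ to get
\[
p'_k \;\geq\; \bar{\gamma}_k\Bigl(\sum_{j\neq k}\frac{\|h_{k,j}\|^2}{\|h_{k,k}\|^2}p'_j+\frac{\sigma_k^2}{\|h_{k,k}\|^2}\Bigr) \;=\; \sum_{j\neq k}G_{k,j}p'_j+\eta_k,\qquad \forall k,
\]
which in vector form is $\mv{p}' \geq \mv{G}\mv{p}'+\mv \eta$. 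Subtracting the equality for $\mv{p}$ from this inequality yields
\[
(\mv{I}-\mv{G})(\mv{p}'-\mv{p}) \;\geq\; \mv{0}.
\]

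The key step is then to invert $\mv{I}-\mv{G}$ while preserving the componentwise inequality. Because $\mv{G}$ is entrywise non-negative and $\rho(\mv{G})<1$, the Neumann series converges and gives
\[
(\mv{I}-\mv{G})^{-1} \;=\; \sum_{n=0}^{\infty}\mv{G}^n \;\geq\; \mv{0}
\]
entrywise (every term in the sum is a product of non-negative matrices, hence non-negative). Left-multiplying the previous inequality by this non-negative matrix preserves the componentwise ordering, yielding $\mv{p}'-\mv{p}\geq \mv{0}$, which is the desired claim.

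The only delicate step is the non-negativity of $(\mv{I}-\mv{G})^{-1}$; the rest is routine linear algebra. I would also note, in passing, that this same Neumann-series observation justifies Proposition \ref{the:feasibility}: the expression $(\mv{I}-\mv{G})^{-1}\mv \eta$ is automatically a non-negative vector under $\rho(\mv{G})<1$ because $\mv \eta \geq \mv{0}$, so $\mv{p}\geq \mv{0}$ is consistent with its role as a physical power allocation.
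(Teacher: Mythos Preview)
Your argument is correct: rewriting the exact SINR equalities as $\mv{p}=\mv{G}\mv{p}+\mv\eta$, the feasibility of $\mv{p}'$ as $\mv{p}'\geq \mv{G}\mv{p}'+\mv\eta$, subtracting, and then left-multiplying by the entrywise non-negative matrix $(\mv{I}-\mv{G})^{-1}=\sum_{n\geq 0}\mv{G}^n$ is exactly the standard route, and every step is justified under the stated hypothesis $\rho(\mv{G})<1$ with $\mv{G}\geq \mv{0}$.

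Note, however, that the paper does not actually supply its own proof of this proposition; it simply attributes both Propositions \ref{the:feasibility} and \ref{the:power constraint} to \cite{Bambos96} and moves on. So there is no in-paper argument to compare against. Your Neumann-series justification is the classical $M$-matrix argument one would expect in that reference, and your side remark that the same observation yields Proposition \ref{the:feasibility} is also correct.
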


Propositions \ref{the:feasibility} and \ref{the:power constraint}
imply that a SINR target vector $\bar{\mv \gamma}$ is feasible if
and only if: $(1)$ $\rho(\mv{G})<1$, and $(2)$ the power solution
obtained by (\ref{equ:power control}) satisfies $p_k\leq P_k^{{\rm
max}}$, $\forall k$. Consequently, we propose Algorithm \ref{table2}
in Table \ref{table2} to solve Problem (P3.1).

\begin{table}[htp]
\begin{center}
\caption{\textbf{Algorithm \ref{table2}}: Algorithm for Solving
Problem (P3.1)} \vspace{0.2cm}
 \hrule
\vspace{0.2cm}
\begin{itemize}
\item[a)] Given any SINR target vector $\bar{\mv \gamma}=(\bar{\gamma}_1,\cdots,\bar{\gamma}_K)$, compute the spectrum radius of
matrix $\mv{G}$. If it is larger than $1$, conclude that there is no
feasible power allocation to meet the SINR target and exit the
algorithm; otherwise, go to step $b)$;
\item[b)] Compute the power allocation $\mv{p}$ by (\ref{equ:power
control}), and check for any user $k$, whether the power constraint
$p_k\leq P_k^{{\rm max}}$ is satisfied. If so, conclude that the
SINR target is feasible; otherwise, the SINR target is not feasible.
\end{itemize}
\vspace{0.2cm} \hrule \label{table2}
\end{center}
\end{table}

\begin{remark} It is worth comparing Algorithm \ref{table1} with the
MAPEL algorithm proposed in \cite{Ping09}. MAPEL solves Problem
(P1.1) for SISO-IC in the SINR region (as opposed to the rate region
in our approach) due to the fact that the problem to characterize
the Pareto boundary of the SINR region for SISO-IC can be
transformed into a generalized linear fractional programming problem
and thus efficiently solved by Dinkelbach-type algorithm
\cite{Frenk06}. However, this transformation does not work for
SIMO-IC or MISO-IC if the beamforming vectors are involved.
Consequently, MAPEL cannot be extended to the GIC with multiple
antennas. As comparison, in this paper we solve the WSR maximization
problem in the rate region directly because the Pareto boundary can
be characterized completely by the rate profile approach, as along
as the associated SINR feasibility problem can be solved. Thus, our
proposed algorithm is more applicable than MAPEL in solving the WSR
maximization problems for SIMO-IC and MISO-IC, as shown next.
\end{remark}

\subsection{The
SIMO-IC Case}\label{sec:SIMO-IC}

The feasibility of Problem (P3.2) can be checked by using the
optimal value of the following SINR balancing
problem:\begin{align}\label{equ:SINR balancing problem for SIMO-IC
with individual power constraint}\mathop{\mathtt{Maximize}}
& ~~ \min\limits_{1\leq k \leq K}\frac{\gamma_k}{\bar{\gamma}_k} \nonumber \\
\mathtt{Subject \ to} & ~~ p_k\leq P_k^{{\rm max}}, ~ \forall
k.\end{align}If the optimal value of Problem (\ref{equ:SINR
balancing problem for SIMO-IC with individual power constraint}) is
no smaller than $1$, then the SINR target vector $\bar{\mv
\gamma}=(\bar{\gamma}_1,\cdots,\bar{\gamma}_K)$ is achievable;
otherwise, this SINR target cannot be achieved.

In \cite{Schubert04}, an efficient algorithm was proposed to solve a
SINR balancing problem similar to Problem (\ref{equ:SINR balancing
problem for SIMO-IC with individual power constraint}), where only
one sum-power constraint is imposed. However, the algorithm in
\cite{Schubert04} cannot be directly applied to solve Problem
(\ref{equ:SINR balancing problem for SIMO-IC with individual power
constraint}) due to multiple users' individual power constraints. To
utilize the algorithm proposed in \cite{Schubert04}, we decouple
Problem (\ref{equ:SINR balancing problem for SIMO-IC with individual
power constraint}) into $K$ sub-problems, with the $i$th sub-problem
formulated
as:\begin{align}\label{equ:sub-problem}\mathop{\mathtt{Maximize}} &
~~ \min\limits_{1\leq k \leq K}\frac{\gamma_k}{\bar{\gamma}_k}
\nonumber \\ \mathtt{Subject \ to} & ~~ p_i\leq P_i^{{\rm max}}.
\end{align}Therefore, for the $i$th sub-problem only the $i$th
user's power constraint is considered. Next, we show how to solve
Problem (\ref{equ:sub-problem}) by extending the algorithm in
\cite{Schubert04}, and then reveal an important relationship between
Problems (\ref{equ:SINR balancing problem for SIMO-IC with
individual power constraint}) and (\ref{equ:sub-problem}), based
upon which we further propose an efficient algorithm to solve
Problem (\ref{equ:SINR balancing problem for SIMO-IC with individual
power constraint}).

\subsubsection{Solution to Problem (\ref{equ:sub-problem})}
\ \ \

In this part, we extend the algorithm proposed in \cite{Schubert04}
to solve Problem (\ref{equ:sub-problem}) for a given $i$.

One important property of the SINR balancing problem in
(\ref{equ:sub-problem}) is that given any receive beamforming
vectors $\bar{\mv{W}}=(\bar{\mv{w}}_1,\cdots,\bar{\mv{w}}_K)$, the
corresponding optimal power allocation $\bar{\mv{p}}$ must satisfy
the following two conditions:\begin{eqnarray} &&
\frac{\gamma_k(\bar{\mv{W}},\bar{\mv{p}})}{\bar{\gamma}_k}=C(\bar{\mv{W}}),
\ \forall k, \label{equ:equal sinr balancing level}\\ &&
\bar{p}_i=P_i^{{\rm max}},\label{eqn:equal power}\end{eqnarray}where
$C(\bar{\mv{W}})$ is the maximum SINR balancing value for all users
given $\bar{\mv{W}}$.

We justify the above conditions as follows. (\ref{equ:equal sinr
balancing level}) can be shown by contradiction. Supposing that the
SINR balancing values are not the same for all the users, then we
select the user with the highest SINR balancing value and decrease
its transmit power by a small amount such that its new SINR
balancing value is still above
$\min\limits_k\frac{\gamma_k}{\bar{\gamma}_k}$. Since the other
users' SINR balancing values will increase, the minimum SINR
balancing value among all the users will increase accordingly. Thus,
whenever the SINR balancing values are not the same for all users,
we can proceed as above to improve the optimal value. Hence,
(\ref{equ:equal sinr balancing level}) must hold. Similarly to show
(\ref{eqn:equal power}) by contradiction, suppose
$\bar{p}_i<P_i^{{\rm max}}$. With $\alpha=\frac{P_i^{{\rm
max}}}{\bar{p}_i}>1$, we can multiply the transmit power values of
each user by $\alpha$, and the SINRs of all users will be increased
accordingly. Hence, (\ref{eqn:equal power}) must hold.

We can express (\ref{equ:equal sinr balancing level}) for all $k$'s
in the following matrix form:\begin{align}\label{equ:matrix
form1}\bar{\mv{p}}\frac{1}{C(\bar{\mv{W}})}=\mv{D} \mv
\Psi(\bar{\mv{W}})\bar{\mv{p}}+\mv{D}\mv \sigma,\end{align}where
$\mv{D}={\rm Diag}\{\frac{\bar{\gamma}_1}{\|\bar{\mv{w}}_1^H\mv{
h}_{1,1}\|^2},\cdots,\frac{\bar{\gamma}_K}{\|\bar{\mv{w}}_K^H\mv{
h}_{K,K}\|^2}\}$, $\mv \sigma=[\sigma_1^2\|\bar{\mv{
w}}_1\|^2,\cdots,\sigma_K^2\|\bar{\mv{w}}_K\|^2]^T$, and the $K
\times K$ non-negative matrix $\mv \Psi(\bar{\mv{W}})$ is a function
of $\bar{\mv{W}}$ defined
as\begin{align}[\Psi(\bar{\mv{W})}]_{k,j}=\left\{\begin{array}{ll}\|\bar{\mv{
w}}_k^H\mv{h}_{k,j}\|^2, & k\neq j\\ 0, &
k=j.\end{array}\right.\end{align}

By multiplying both sides of (\ref{equ:matrix form1}) by
$\mv{e}_i^T$, we obtain
\begin{align}\label{equ:power constraint}\mv{e}_i^T\bar{\mv{p}}\frac{1}{C(\bar{\mv{W}})}=\frac{P_i^{{\rm max}}}{C(\bar{\mv{W}})}=\mv{e}_i^T\mv{D} \mv \Psi(\bar{\mv{W}})\bar{\mv{p}}+\mv{e}_i^T\mv{D}\mv \sigma.\end{align}
Therefore, by combining (\ref{equ:matrix form1}) and (\ref{equ:power
constraint}), it follows that
\begin{align}\label{equ:matrix form2}\frac{1}{C(\bar{\mv{W}})}\bar{\mv{p}}_{{\rm ext}}=
\mv{A}_i(\bar{\mv{W}})\bar{\mv{p}}_{{\rm ext}},\end{align}where $\bar{\mv{p}}_{{\rm ext}}=\left(\begin{array}{c}\bar{\mv{p}} \\
1\end{array}\right)$ and
\begin{align}\label{eqn:1}\mv{A}_i(\bar{\mv{
W}})=\left(\begin{array}{cc}\mv{D}\mv \Psi(\bar{\mv{W}}) & \mv{
D}\mv \sigma\\ \frac{1}{P_i^{{\rm max}}}\mv{e}_i^T\mv{D}\mv
\Psi(\bar{\mv{W}}) & \frac{1}{P_i^{{\rm max}}}\mv{e}_i^T\mv{D}\mv
\sigma\end{array}\right).\end{align}

Next, we show one important property for (\ref{equ:matrix form2}) in
the following lemma.\begin{lemma}\label{lem:unique solution} Given
any fixed $\bar{\mv{W}}$, there exists a unique solution
$(\bar{\mv{p}},C(\bar{\mv{W}}))$ to the equation in (\ref{equ:matrix
form2}).\end{lemma}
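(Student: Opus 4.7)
The plan is to recognize (\ref{equ:matrix form2}) as a Perron--Frobenius eigenvalue problem for the non-negative matrix $\mv{A}_i(\bar{\mv{W}})$, in which the scalar $1/C(\bar{\mv{W}})$ plays the role of an eigenvalue and $\bar{\mv{p}}_{\rm ext}$ of its eigenvector, subject to the normalization that fixes the last coordinate of $\bar{\mv{p}}_{\rm ext}$ to $1$. First I would verify from (\ref{eqn:1}) that $\mv{A}_i(\bar{\mv{W}})$ is entry-wise non-negative: the diagonal of $\mv{D}$ is strictly positive because each $\bar{\gamma}_k>0$ and $\|\bar{\mv{w}}_k^H\mv{h}_{k,k}\|^2>0$, the matrix $\mv{\Psi}(\bar{\mv{W}})$ is non-negative by construction, and every component of $\mv{\sigma}$ is strictly positive since $\sigma_k^2>0$. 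The crucial structural observation is that the last column of $\mv{A}_i(\bar{\mv{W}})$, namely the stacked vector built from $\mv{D}\mv{\sigma}$ and $(1/P_i^{\rm max})\mv{e}_i^T \mv{D}\mv{\sigma}$, is strictly positive.

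Using this last-column positivity, I would then show that any admissible eigenvector must in fact lie in the interior of the non-negative orthant. Reading the $k$th scalar equation of $\mv{A}_i(\bar{\mv{W}})\bar{\mv{p}}_{\rm ext}=(1/C(\bar{\mv{W}}))\bar{\mv{p}}_{\rm ext}$ for $k=1,\ldots,K$, and using that the last coordinate of $\bar{\mv{p}}_{\rm ext}$ equals $1$, one obtains $(1/C(\bar{\mv{W}}))\,\bar{p}_k\geq [\mv{D}\mv{\sigma}]_k>0$, so every $\bar{p}_k$ is strictly positive; the last-row equation similarly forces $1/C(\bar{\mv{W}})>0$. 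At this point the standard Collatz--Wielandt consequence of the Perron--Frobenius theorem applies: whenever a non-negative matrix admits a strictly positive eigenvector, the associated eigenvalue must equal the spectral radius. This forces $1/C(\bar{\mv{W}})=\rho(\mv{A}_i(\bar{\mv{W}}))$, which uniquely pins down $C(\bar{\mv{W}})$. Since the Perron eigenvector associated with a strictly positive eigenvector is unique up to a positive scalar, the normalization ``last entry equals $1$'' selects a single representative, so $\bar{\mv{p}}$ is likewise uniquely determined.

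The main obstacle I anticipate is carrying the uniqueness statement through even when $\mv{A}_i(\bar{\mv{W}})$ fails to be irreducible, which can happen in degenerate channel configurations where some cross-terms $\|\bar{\mv{w}}_k^H\mv{h}_{k,j}\|^2$ vanish and the classical Perron--Frobenius statements weaken. I would handle this by exploiting the block structure of (\ref{eqn:1}): the strictly positive last column, together with the positive $(K+1,K+1)$-entry, ensures that the last coordinate is reachable from every other coordinate in one step of $\mv{A}_i(\bar{\mv{W}})$, which is enough to restore simplicity of the Perron eigenvalue on the non-negative cone and hence to guarantee that the pair $(\bar{\mv{p}},C(\bar{\mv{W}}))$ is unique regardless of whether the full matrix is irreducible.
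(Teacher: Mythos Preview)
Your approach is sound and genuinely different from the paper's. The paper does not invoke Perron--Frobenius theory in its proof of the lemma at all; instead it argues by direct contradiction at the level of the SINR-balancing relations (\ref{equ:equal sinr balancing level})--(\ref{eqn:equal power}). Assuming two distinct solutions $(\bar{\mv p},C)$ and $(\bar{\mv p}',C')$, the paper forms the ratios $\theta_k=\bar p_k'/\bar p_k$, uses $\bar p_i=\bar p_i'=P_i^{\max}$ to anchor $\theta_i=1$, and then, by examining the index with the largest $\theta_k$ and the one with the smallest $\theta_k$, derives simultaneously $C'>C$ and $C'<C$. That argument is elementary, self-contained, and stays close to the physical interpretation; your spectral route is more structural and makes the role of the strictly positive noise column $\mv D\mv\sigma$ explicit. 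Interestingly, the paper \emph{does} use exactly the Perron--Frobenius identification $1/C(\bar{\mv W})=\rho(\mv A_i(\bar{\mv W}))$ immediately \emph{after} the lemma, so your proof in effect merges the lemma and that subsequent step into a single spectral statement.

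One caution on your last paragraph: positivity of the last column of $\mv A_i(\bar{\mv W})$ gives accessibility of every state \emph{from} the $(K{+}1)$st node (in the convention $A_{kj}>0\Rightarrow j\to k$), not strong connectivity, so simplicity of the Perron root on the non-negative cone does not follow from that observation alone. A clean way to finish within your framework is to read off, from the first $K$ rows of (\ref{equ:matrix form2}) together with $\mv D\mv\sigma>0$, the strict subinvariance inequality $\mv D\mv\Psi(\bar{\mv W})\,\bar{\mv p}<\rho\,\bar{\mv p}$ with $\bar{\mv p}>0$ and $\rho=1/C(\bar{\mv W})$; this forces $\rho(\mv D\mv\Psi(\bar{\mv W}))<\rho$, so $\rho\mv I-\mv D\mv\Psi(\bar{\mv W})$ is nonsingular and $\bar{\mv p}=(\rho\mv I-\mv D\mv\Psi(\bar{\mv W}))^{-1}\mv D\mv\sigma$ is uniquely determined, with no irreducibility assumption needed.
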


\begin{proof}
Please refer to Appendix \ref{appendix:proof of lemma}.
\end{proof}

According to Perron-Frobenius theory \cite{Horn85}, for any
nonnegative matrix, there is at least one positive eigenvalue and
the spectral radius of the matrix is equal to the largest positive
eigenvalue. Furthermore, according to Lemma \ref{lem:unique
solution}, there is only one strictly positive eigenvalue to matrix
$\mv{A}_i(\bar{\mv{W}})$. Accordingly, it follows from
(\ref{equ:matrix form2}) that given $\bar{\mv{W}}$, the inverse of
the optimal SINR balancing value $1/C(\bar{\mv{W}})$ is the spectral
radius of $\mv{A}_i(\bar{\mv{W}})$. Consequently, the maximum SINR
balancing solution to Problem (\ref{equ:sub-problem}) is obtained as
\begin{align}\label{eqn:optimal balance}C^\ast=\frac{1}{\min\limits_{\mv{W}}\rho(\mv{A}_i(\mv{
W}))}.\end{align}

Next, by defining a cost function as
\begin{align}\Upsilon(\mv{W},\mv{p}_{{\rm
ext}})=\max\limits_{\mv{x}>0}\frac{\mv{x}^T\mv{A}_i(\mv{W})\mv{p}_{{\rm
ext}}}{\mv{x}^T\mv{p}_{{\rm ext}}},\end{align}then the min-max
characterization of the spectral radius of $\mv{A}_i(\mv{W})$ can be
expressed as \cite{Schubert04}, \cite{Codreanu07}
\begin{align}\label{eqn:maxmin}\rho(\mv{A}_i(\mv{ W}))=\min\limits_{\mv{p}_{{\rm
ext}}}\Upsilon(\mv{W},\mv{p}_{{\rm ext}}).\end{align}Taking
(\ref{eqn:maxmin}) into (\ref{eqn:optimal balance}), it follows that
\begin{align}\label{eqn:maxmin problem}\frac{1}{C^\ast}=\min\limits_{\mv{W}}\min\limits_{\mv{p}_{{\rm
ext}}}\Upsilon(\mv{W},\mv{p}_{{\rm ext}}).\end{align}

Similar to \cite{Schubert04}, we can solve Problem (\ref{eqn:maxmin
problem}) via the alternating optimization shown as follows. First,
given $\bar{\mv{W}}$, we find the optimal power allocation for
$\mv{p}_{{\rm ext}}$. Let $\bar{\mv{p}}_{{\rm ext}}$ denote the
dominant eigenvector corresponding to the spectral radius of
$\mv{A}_i(\bar{\mv{W}})$. It then follows that
\begin{align}\frac{\mv{x}^T\mv{A}_i(\bar{\mv{W}})\bar{\mv{p}}_{{\rm
ext}}}{\mv{x}^T\bar{\mv{p}}_{{\rm
ext}}}=\rho(\mv{A}_i(\bar{\mv{W}}))=\min\limits_{\mv{p}_{{\rm
ext}}}\Upsilon(\bar{\mv{W}},\mv{p}_{{\rm ext}}).\end{align}Thus,
$\bar{\mv{p}}_{{\rm ext}}$ is the optimal power allocation given
$\bar{\mv{W}}$.

Furthermore, we know that given any power allocation $\mv{p}_{{\rm
ext}}$, the optimal receive beamformer in $\mv{W}$ to maximize the
SINR is minimum-mean-squared-error (MMSE) based for each of the
users. Therefore, we propose an iterative algorithm in Table
\ref{table3}, denoted as Algorithm \ref{table3}, to solve Problem
(\ref{equ:sub-problem}).

\begin{table}[htp]
\begin{center}
\caption{\textbf{Algorithm \ref{table3}}: Algorithm for Solving
Problem (\ref{equ:sub-problem})} \vspace{0.2cm}
 \hrule
\vspace{0.2cm}
\begin{itemize}
\item[a)] Initialize: $n=0$, $\mv{p}^{(0)}=[0,\cdots,0]^T$ and $\rho^{(0)}=\infty$;
\item[b)] {\bf repeat}
\begin{itemize}
\item[1)] $n=n+1$;
\item[2)] Update $\mv{W}^{(n)}$ by $\mv{w}_k^{(n)}=(\sum\limits_{j\neq
k}p_j^{(n-1)}{\bf h}_{k,j}{\bf h}_{k,j}^H+\sigma_k^2{\bf
I})^{-1}{\bf h}_{k,k}, ~~ \forall k$;
\item[3)] Update $\mv{p}^{(n)}_{{\rm ext}}$ as the dominant eigenvector of the matrix $\mv{A}_i(\mv{W}^{(n)})$;
\item[4)] $\rho^{(n)}=\rho(\mv{A}_i(\mv{W}^{(n)}))$ and $C^{(n)}=\frac{1}{\rho^{(n)}}$;
\end{itemize}
\item[c)] {\bf until} $\rho^{(n-1)}-\rho^{(n)}<\epsilon$.
\end{itemize}
\vspace{0.2cm} \hrule \label{table3}
\end{center}
\end{table}

The convergence of Algprithm \ref{table3} can be shown in the
following way. Since given any power allocation $\mv{p}_{{\rm
ext}}^{(n)}$ for the $n$th iteration, $\mv{W}^{(n+1)}$ minimizes
$\Upsilon(\mv{W},\mv{p}_{{\rm ext}}^{(n)})$, i.e.,
\begin{align}\Upsilon(\mv{W}^{(n+1)},\mv{p}_{{\rm ext}}^{(n)})\leq \Upsilon(\mv{W}^{(n)},\mv{p}_{{\rm
ext}}^{(n)})=\rho^{(n)}.\end{align}Moreover, given $\mv{W}^{(n+1)}$,
$\mv{p}_{{\rm ext}}^{(n+1)}$ minimizes
$\Upsilon(\mv{W}^{(n+1)},\mv{p}_{{\rm ext}})$ as
\begin{align}\rho^{(n+1)}=\Upsilon(\mv{W}^{(n+1)},\mv{p}_{{\rm
ext}}^{(n+1)})\leq \Upsilon(\mv{W}^{(n+1)},\mv{p}_{{\rm
ext}}^{(n)}).\end{align}Hence, we can guarantee $\rho^{(n+1)}\leq
\rho^{(n)}$ after each iteration. Since $\rho$ is lower-bounded by
$0$, Algorithm \ref{table3} thus converges.

Finally, the convergence of Algorithm \ref{table3} to the global
optimality of Problem (\ref{equ:sub-problem}) can be proven
similarly as Section IV.A in \cite{Schubert04}, and the proof is
thus omitted for brevity. After convergence, $C^{(n)}\bar{\mv
\gamma}$ is the maximum achievable SINR vector and $\mv{p}^{(n)}$,
$\mv{W}^{(n)}$ are the optimal power and receive beamforming vectors
to achieve this SINR vector, respectively.

\subsubsection{Solution to Problem (\ref{equ:SINR balancing problem for SIMO-IC with
individual power constraint})}

\ \ \

Next, we show that Problem (\ref{equ:SINR balancing problem for
SIMO-IC with individual power constraint}) can be efficiently solved
via solving Problem (\ref{equ:sub-problem}) for all $i$'s. Let
$\mv{W}^\ast$ and $\mv{p}^\ast$ denote the optimal beamforming
vectors and power allocation for Problem (\ref{equ:SINR balancing
problem for SIMO-IC with individual power constraint}),
respectively. Let $\mv{W}^\ast_i$ and $\mv{p}^\ast_i$ denote the
optimal beamforming vectors and power allocation for the $i$th
sub-problem in (\ref{equ:sub-problem}), respectively. Next, we
provide a theorem to reveal the relationship between the optimal
solutions to Problems (\ref{equ:SINR balancing problem for SIMO-IC
with individual power constraint}) and (\ref{equ:sub-problem}).

\begin{theorem}\label{the:jonit beamforming and power allocation}
For all sub-problems in (\ref{equ:sub-problem}) with $i=1,\cdots,K$,
there exists one and only one sub-problem for which the optimal
power solution satisfies all users' individual power constraints of
Problem (\ref{equ:SINR balancing problem for SIMO-IC with individual
power constraint}). Furthermore, let $i^\ast$ denote the index of
the corresponding sub-problem in (\ref{equ:sub-problem}), then it
holds that $\mv{W}^\ast=\mv{ W}^\ast_{i^\ast}$, and
$\mv{p}^\ast=\mv{p}^\ast_{i^\ast}$.
\end{theorem}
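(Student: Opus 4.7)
The plan rests on the observation that each sub-problem in (\ref{equ:sub-problem}) is a relaxation of (\ref{equ:SINR balancing problem for SIMO-IC with individual power constraint})---only the $i$-th of the $K$ individual power constraints is retained---so the corresponding optimal SINR balancing values satisfy $C_i\geq C^*$ for every $i$, where $C^*$ denotes the optimum of (\ref{equ:SINR balancing problem for SIMO-IC with individual power constraint}). I will identify $i^*$ as the index of the (generically unique) power constraint active at the optimum $(\mv{W}^*,\mv{p}^*)$ of (\ref{equ:SINR balancing problem for SIMO-IC with individual power constraint}), and then show that $(\mv{W}^*,\mv{p}^*)$ already meets every condition that pins down the optimum of sub-problem $i^*$.

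First I would verify that at least one individual power constraint must be active at $(\mv{W}^*,\mv{p}^*)$. If $p_k^*<P_k^{\max}$ for every $k$, scaling $\mv{p}^*$ by any $\alpha\in\bigl(1,\min_k P_k^{\max}/p_k^*\bigr)$ yields a feasible solution whose SINR numerator scales by $\alpha$ while the noise term $\sigma_k^2\|\mv{w}_k^*\|^2$ in the denominator does not, strictly increasing every $\gamma_k$; this is exactly the calculation used to justify (\ref{eqn:equal power}) and contradicts the optimality of $C^*$. Let $i^*$ be an index with $p_{i^*}^*=P_{i^*}^{\max}$. Recall also that at this optimum the SINRs are balanced, $\gamma_k=C^*\bar{\gamma}_k$ for every $k$, and that each $\mv{w}_k^*$ is the MMSE receive beamformer for $\mv{p}^*$.

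Next I would show that $(\mv{W}^*,\mv{p}^*)$ is optimal for sub-problem $i^*$. It is feasible for sub-problem $i^*$ (whose lone constraint $p_{i^*}\leq P_{i^*}^{\max}$ is met with equality) and attains balancing level $C^*$, so $C_{i^*}\geq C^*$. For the reverse inequality, I would use the Lagrangian viewpoint: under the non-degeneracy assumption that only the $i^*$-th constraint is active at $(\mv{W}^*,\mv{p}^*)$, the KKT multipliers $\lambda_k$ of (\ref{equ:SINR balancing problem for SIMO-IC with individual power constraint}) satisfy $\lambda_k=0$ for $k\neq i^*$, so the Lagrangian of (\ref{equ:SINR balancing problem for SIMO-IC with individual power constraint}) coincides with that of sub-problem $i^*$. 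Hence $(\mv{W}^*,\mv{p}^*)$ is a stationary point of sub-problem $i^*$ too, and by Lemma \ref{lem:unique solution} together with the monotone convergence of Algorithm \ref{table3}---which jointly imply that sub-problem $i^*$ admits a unique global optimum---we must have $C_{i^*}=C^*$ and $(\mv{W}_{i^*}^*,\mv{p}_{i^*}^*)=(\mv{W}^*,\mv{p}^*)$. In particular $\mv{p}_{i^*}^*$ satisfies all $K$ individual power constraints of (\ref{equ:SINR balancing problem for SIMO-IC with individual power constraint}).

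For the uniqueness of $i^*$, let $j\neq i^*$ and consider sub-problem $j$, whose optimum necessarily has $p_{j,j}^{(j)*}=P_j^{\max}$. If $\mv{p}_j^*$ also satisfied every remaining power constraint, applying the argument above to $j$ in place of $i^*$ would give $\mv{p}_j^*=\mv{p}^*$, hence $p_j^*=P_j^{\max}$, i.e., a second active constraint at $(\mv{W}^*,\mv{p}^*)$; under the non-degeneracy assumption this is impossible, so $\mv{p}_j^*$ must violate some constraint $p_k\leq P_k^{\max}$ with $k\neq j$. I anticipate that the principal technical obstacle is precisely this non-degeneracy---rigorously arguing, or stating as a mild assumption, that exactly one individual power constraint is active at the optimum of (\ref{equ:SINR balancing problem for SIMO-IC with individual power constraint}) so that $i^*$ is unambiguously defined and the ``one and only one'' statement holds without qualification, since in the degenerate case multiple sub-problem indices could yield the same (common) optimum.
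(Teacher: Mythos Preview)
The paper does not actually prove this theorem; it simply refers the reader to Section~IV.B of \cite{Zhanglan08}. So there is no in-paper argument to compare against, and your overall strategy---pick an active constraint index $i^*$ at the optimum of (\ref{equ:SINR balancing problem for SIMO-IC with individual power constraint}) and show that the optimum of the full problem coincides with that of sub-problem $i^*$---is the natural one and is essentially what the cited reference does.

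There is, however, a genuine gap in your reverse-inequality step. You deduce that $(\mv{W}^*,\mv{p}^*)$ is a KKT stationary point of sub-problem $i^*$, note that sub-problem $i^*$ has a unique global optimum, and conclude that $(\mv{W}^*,\mv{p}^*)$ must therefore \emph{be} that optimum. This inference fails for non-convex problems: uniqueness of the global maximizer says nothing about the number of stationary points, so a KKT point need not be globally optimal. The clean way to close the gap avoids KKT altogether. The global optimum of sub-problem $i^*$ is characterized (via the Schubert--Boche theory the paper invokes) as the unique fixed point of Algorithm~\ref{table3}: balanced SINRs as in (\ref{equ:equal sinr balancing level}), the single constraint active as in (\ref{eqn:equal power}), and MMSE receive beamformers given the power. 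The optimum $(\mv{W}^*,\mv{p}^*)$ of (\ref{equ:SINR balancing problem for SIMO-IC with individual power constraint}) already satisfies all three conditions with the $i^*$-th constraint active, hence it is a fixed point of Algorithm~\ref{table3}; and since that algorithm is asserted to converge to the global optimum from any initialization, every fixed point is globally optimal. This gives $C_{i^*}=C^*$ and $(\mv{W}_{i^*}^*,\mv{p}_{i^*}^*)=(\mv{W}^*,\mv{p}^*)$ directly.

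Your diagnosis of the ``one and only one'' clause is correct: if two individual power constraints were simultaneously active at the optimum of (\ref{equ:SINR balancing problem for SIMO-IC with individual power constraint}), the fixed-point argument above would apply to both indices and both sub-problems would share the same optimum, so both would pass the feasibility test. The statement therefore implicitly assumes generic channels (a single active constraint), and the proof in \cite{Zhanglan08} operates under the same tacit assumption; you are right to flag it.
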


\begin{proof}
Please refer to Section IV.B in \cite{Zhanglan08}.
\end{proof}

Theorem \ref{the:jonit beamforming and power allocation} reveals
that Problem (\ref{equ:SINR balancing problem for SIMO-IC with
individual power constraint}) can be solved as follows. First, we
apply Algorithm \ref{table3} to solve Problem
(\ref{equ:sub-problem}) in the order of $i=1,\cdots,K$. If the
optimal power solution to any of these problems satisfies all users'
individual power constraints, the algorithm terminates, and the
obtained optimal power and beamforming solutions to Problem
(\ref{equ:sub-problem}) are also those to Problem (\ref{equ:SINR
balancing problem for SIMO-IC with individual power constraint}).
The above algorithm, denoted by Algorithm \ref{table4}, is
summarized in Table \ref{table4}.

\begin{table}[htp]
\begin{center}
\caption{\textbf{Algorithm \ref{table4}}: Algorithm for Solving
Problem (\ref{equ:SINR balancing problem for SIMO-IC with individual
power constraint})} \vspace{0.2cm}
 \hrule
\vspace{0.2cm}
\begin{itemize}
\item[a)] Initialize: $i=0$;
\item[b)] {\bf repeat}
\begin{itemize}
\item[1)] $i=i+1$;
\item[2)] Solve the $i$th sub-problem in (\ref{equ:sub-problem}) by Algorithm \ref{table3}, and find the optimal beamforming solution $\mv{W}^\ast_i$ and power solution $\mv{p}^\ast_i$;
\item[3)] Check whether $\mv{p}^\ast_i$ satisfies all power
constraints of Problem (\ref{equ:SINR balancing problem for SIMO-IC
with individual power constraint}). If so, exit the algorithm and
set $\mv{W}^\ast_i$ and $\mv{p}^\ast_i$ as the optimal solution to
Problem (\ref{equ:SINR balancing problem for SIMO-IC with individual
power constraint}); otherwise, continue the algorithm;
\end{itemize}
\item[c)] {\bf until} $i=K$.
\end{itemize}
\vspace{0.2cm} \hrule \label{table4}
\end{center}
\end{table}

\subsection{The MISO-IC Case}\label{sec:MISO-IC}

In this subsection, we show how to solve the feasibility problem
(P3.3) for MISO-IC under the equivalent SINR constraints given by
(\ref{equ:sinr feasibility}). It was shown in \cite{Rui10} that this
problem can be transformed into a second-order cone programming
(SOCP) problem, which is briefly described as follows for the sake
of completeness. The SINR constraints in Problem (P3.3) can be
rewritten as
\begin{align}(1+\frac{1}{\bar{\gamma}_k})\|\mv{h}_{k,k}^H \mv{
v}_k\|^2\geq
\sum\limits_{j=1}^K\|\mv{h}_{k,j}^H\mv{v}_j\|^2+\sigma_k^2, ~~
\forall k. \end{align}

Without loss of generality, we can assume that
$\mv{h}_{k,k}^H\mv{w}_k$ is a positive real number, $\forall k$.
Thus we can reformulate the above SINR constraints
as\begin{align}\label{equ:transform1}\sqrt{1+\frac{1}{\bar{\gamma}_k}}\mv{h}_{k,k}^H
\mv{ v}_k\geq
\sqrt{\sum\limits_{j=1}^K\|\mv{h}_{k,j}^H\mv{v}_j\|^2+\sigma_k^2},
~~ \forall k.\end{align}

Denote $\mv{x}=[\mv{v}_1^T,\cdots,\mv{v}_K^T,0]^T$ of dimension
$(K^2+1)\times 1$, $\mv{n}_k=[0,\cdots,0,\sigma_k]^T$ of dimension
$(K+1)\times 1$, and $\mv{E}_k={\rm
Diag}(\mv{h}_{k,1}^H,\cdots,\mv{h}_{k,K}^H,0)$ of dimension
$(K+1)\times(K^2+1)$, $\forall k$. We further define $\mv{L}_k$ as
\begin{align}\mv{L}_k=\bigg[\underbrace{\mv{0}^{K\times K},\cdots,\mv{0}^{K\times
K}}_{k-1},\mv{I}^{K\times K},\underbrace{\mv{0}^{K\times
K},\cdots,\mv{0}^{K\times K}}_{K-k},\mv{0}^{K\times
1}\bigg],\end{align}where $\mv{0}^{K\times K}$ and $\mv{0}^{K\times
1}$ denote the $K\times K$ all-zero matrix and $K\times 1$ all-zero
vector, respectively, and $\mv{I}^{K\times K}$ denotes the $K\times
K$ identity matrix. Thus, (\ref{equ:transform1}) can be reformulated
as\begin{align}\label{equ:transform2}\left\|\mv{E}_k\mv{x}+\mv{n}_k\right\|
\leq \sqrt{1+\frac{1}{\bar{\gamma}_k}}\mv{h}_{k,k}^H\mv{L}_k\mv{x},
~~ \forall k.\end{align}

Moreover, we can reformulate the power constraints
as\begin{align}\label{equ:transform3}\|\mv{L}_k\mv{x}\| \leq
\sqrt{P_k^{{\rm max}}}, ~~ \forall k. \end{align}

Using (\ref{equ:transform2}) and (\ref{equ:transform3}), Problem
(P3.3) can be transformed into a SOCP feasibility problem over
$\mv{x}$ and efficiently solvable by existing software \cite{Grant}.

\begin{remark}It is worth comparing our proposed algorithm with that in \cite{Jorswieck10} for solving the WSR maximization problem (P1.3) for MISO-IC. The algorithm in
\cite{Jorswieck10} is based on a prior result in \cite{Jorswieck08}
that for the special case of two-user MISO-IC, any point on the
Pareto boundary of the rate region can be achieved by transmit
beamforming vectors that are obtained by linearly combining the ZF
and MRT beamformers. In \cite{Jorswieck10}, the outer polyblock
approximation algorithm was applied to find the optimal beamformer
combining coefficients. However, since this result does not hold for
MISO-IC with more than two users, the algorithm in
\cite{Jorswieck10} cannot be extended to the general $K$-user
MISO-IC with $K>2$. In contrast, our proposed algorithm can be
applied to MISO-IC with arbitrary number of users.
\end{remark}

\section{Numerical Results}\label{sec:numerical results}

In this section, we provide numerical results to validate the
proposed algorithms in this paper. We assume that $\mu_k=1$,
$\forall k$, i.e., the sum-rate maximization problem is considered.
We also assume that $P_k^{{\rm max}}=3$, $\forall k$. For SIMO-IC
and MISO-IC, we further assume that $M_k=2$ and $N_k=2$,
respectively, $\forall k$. The numerical results with related
discussions are presented in the following subsections.

\subsection{Convergence Performance}\label{subsec:convergence
performance} Firstly, we study the convergence performance of
Algorithm \ref{table1} for SISO-IC. We assume that there are $4$
users, i.e., $K=4$, and there is a minimum rate constraint for each
user with $R_k^{{\rm min}}=0.5$, $\forall k$. We set the parameters
to control the accuracy of Algorithm \ref{table1} as $\epsilon=0.01$
and $\eta=0.5$. We consider the following
matrix:\begin{align}\label{equ:channel
gain}\mv{H}=\left[\begin{array}{cccc}0.4310 & 0.0022 & 0.0105 & 0.0042\\
0.0200 & 0.4102 & 0.0180 & 0.0035\\0.0210 & 0.0200 & 0.5162 &
0.0112\\ 0.0210 & 0.0021 & 0.0063 &
0.3634\end{array}\right],\end{align}with each element denoting the
power of the corresponding channel gain, i.e.,
$H_{k,j}=\|h_{k,j}\|^2$.

\begin{figure}
\begin{center}
\scalebox{0.9}{\includegraphics*[104pt,248pt][494pt,549pt]{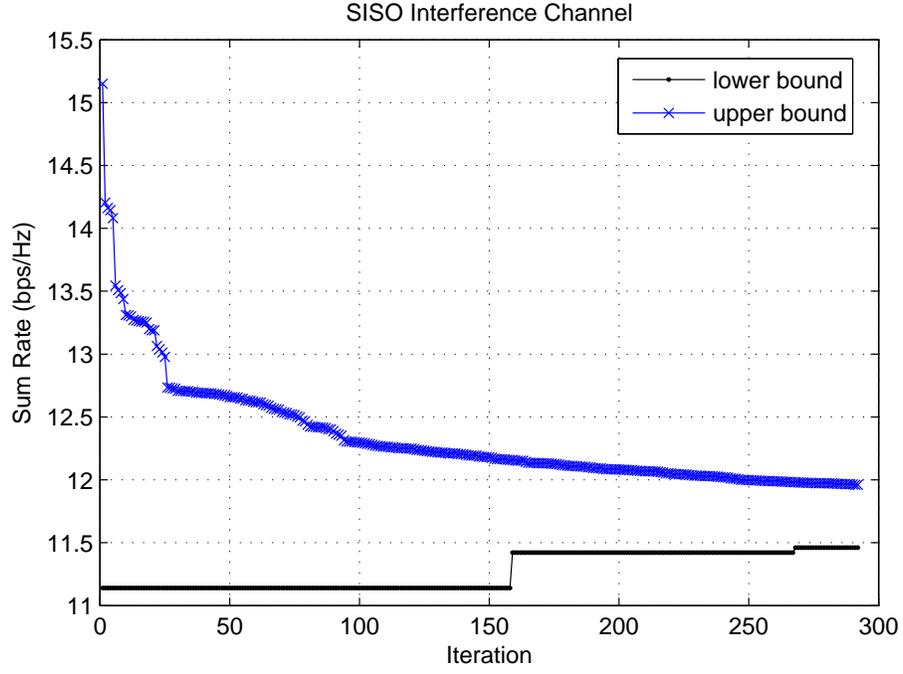}}
\end{center}
\caption{Convergence performance of Algorithm \ref{table1} for
SISO-IC with weak interference channel gains.}\label{fig3}
\end{figure}

\begin{figure}
\begin{center}
\scalebox{0.9}{\includegraphics*[110pt,249pt][498pt,548pt]{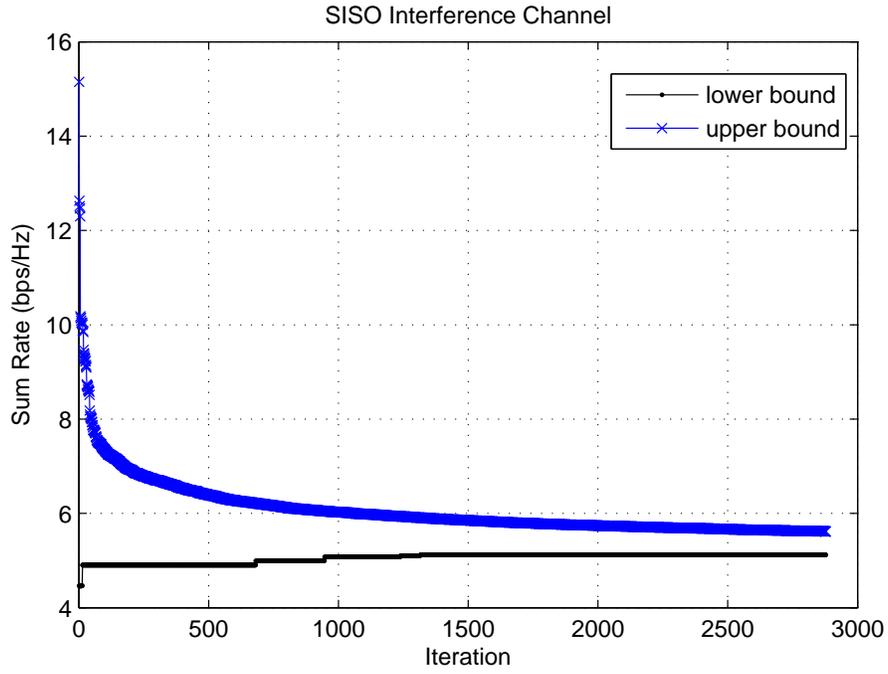}}
\end{center}
\caption{Convergence performance of Algorithm \ref{table1} for
SISO-IC with strong interference channel gains.}\label{fig4}
\end{figure}

Fig. \ref{fig3} shows the convergence of Algorithm \ref{table1}
under the above channel setup. It is observed that this algorithm
takes about $300$ iterations to converge. The converged sum-rate is
$11.4605$ with users' individual rates given by $[3.1982,2.6297,
2.8441,2.7884]$. To verify that the global sum-rate maximum is
achieved, we compare the obtained maximum sum-rate with that by an
exhaustive search, which is equal to $11.5349$. Thus, Algorithm
\ref{table1} does achieve the global optimality of sum-rate
maximization within a guaranteed error $11.5349-11.4605=0.0744$,
which is smaller than the set threshold $\eta=0.5$.

Next, we consider a SISO-IC with stronger cross-user interference
channel gains than those given in (\ref{equ:channel gain}) by
keeping all diagonal elements of $\mv{H}$ unchanged, but scaling all
off-diagonal elements by $10$ times. As shown in Fig. \ref{fig4},
for this new channel setup, Algorithm \ref{table1} takes about
$2900$ iterations to converge. The converged sum-rate in this case
is $5.1184$ with users' individual rates given by $[0.5408, 1.9119,
0.5060, 2.1597]$, while that obtained by the exhaustive search is
$5.1392$. Thus, as compared to the previous case with weaker
interference channel gains, the global optimality for sum-rate
maximization is achieved in this case with a much slower
convergence. The reason is as follows. With stronger interference
channel gains, the optimal power allocation for sum-rate
maximization is more likely to render some users transmit at their
minimum required rates (e.g., user $1$ and user $3$ in this
example). Hence, the corresponding optimal rate values will lie in
the strip defined by $\{\mv{r}^\ast|R_k^{{\rm min}}\leq R_k^\ast
\leq R_k^{{\rm min}}+\epsilon \}$ for some $k$'s. Since in Algorithm
\ref{table1} each new polyblock is generated from the previous one
by cutting off some unfit portions, the cuts become shallower and
shallower as $\tilde{\mv{z}}^{(n)}$ approaches the above strip. This
can be observed from Fig. \ref{fig4} that after the $1300$th
iteration, the best intersection point $\tilde{\mv{r}}^{(n)}$ has
never changed. However, to make
$U(\tilde{\mv{z}}^{(n)})-U(\tilde{\mv{r}}^{(n)})\leq \eta$ hold,
another $1700$ iterations are taken just to reduce the value of
$U(\tilde{\mv{z}}^{(n)})$. Since this reduction becomes very
inefficient near the strip, the algorithm converges much more slowly
to the desired accuracy with the increasing of interference channel
gains. From this observation, we infer that the values of $\epsilon$
and $\eta$ need to be properly set to balance between the accuracy
and convergence speed of our proposed algorithm.

Next, we give another example to illustrate the important role of
parameter $\epsilon$ in balancing between the accuracy and
convergence speed of our proposed algorithm. We assume that $K=3$,
and there are no minimum rate requirements for the users. We
consider the following channel
matrix:\begin{align}\label{equ:channel
gain1}\mv{H}=\left[\begin{array}{ccc}0.4310 & 0.0187 & 0.0893 \\
0.1700 & 0.4102 & 0.1530 \\0.1785 & 0.1700 & 0.5162
\end{array}\right],\end{align}with $H_{k,j}=\|h_{k,j}\|^2$. By an
exhaustive search, the optimal sum-rate is obtained as $4.8079$ with
users' individual rates given by $[3.2146,1.5933,0]$.

\begin{table}
\caption{Selection of $\epsilon$ on the Performance of the Proposed
Algorithm} \label{table5}
\begin{center}
\begin{tabular}{c|c|c}
\hline Value of $\epsilon$ & Number of iterations & Converged WSR \\
\hline\hline
$0.05$ & $8183$ & $4.7625$ \\
$0.10$ & $3498$ & $4.7438$ \\
$0.15$ & $2212$ & $4.7275$ \\
$0.20$ &  $1642$ & $4.6942$ \\
$0.25$ & $1396$ & $4.6825$ \\
$0.30$ & $1148$ & $4.6620$ \\
$0.35$ & $1029$ & $4.6350$ \\
$0.40$ & $866$ & $4.6165$ \\
$0.45$ & $651$ & $4.5880$ \\

 \hline
\end{tabular}
\end{center}
\end{table}

Table \ref{table5} shows the convergence speed and the converged
sum-rate of our proposed algorithm for different values of
$\epsilon$ with $\eta=0.2$. We observe that as $\epsilon$ increases,
the algorithm convergence speed improves rapidly, but the converged
sum-rate decreases. When $\epsilon=0.45$, the difference between the
optimal sum-rate and converged sum-rate is $4.8079-4.5880=0.2199$,
which is even larger than $\eta=0.2$. This is because that as we
show in Section \ref{subsec:algorithm}, with non-zero $\epsilon$, we
are in fact solving Problem P2-A instead of the original problem P2.
Consequently, the proposed algorithm can only guarantee that the
difference between the maximum sum-rate of Problem P2-A and the
converged sum-rate is less than $\eta$, but not necessarily for
Problem P2. Thus, if the value of $\epsilon$ is selected to be too
large such that all the $\eta$-optimal solutions lie in the excluded
strips, the difference of the converged sum-rate and the maximum
sum-rate of Problem (P2) will be larger than $\eta$. Therefore, the
value of $\epsilon$ should be carefully selected based on the value
of $\eta$. In this numerical example, we can select $\epsilon=0.40$
such that the $\eta$-optimal solution is still guaranteed and also
the converged speed is reasonably fast.

\subsection{Providing Performance Benchmark for Other Heuristic Algorithms}

\begin{figure}
\begin{center}
\scalebox{0.9}{\includegraphics*[100pt,274pt][489pt,566pt]{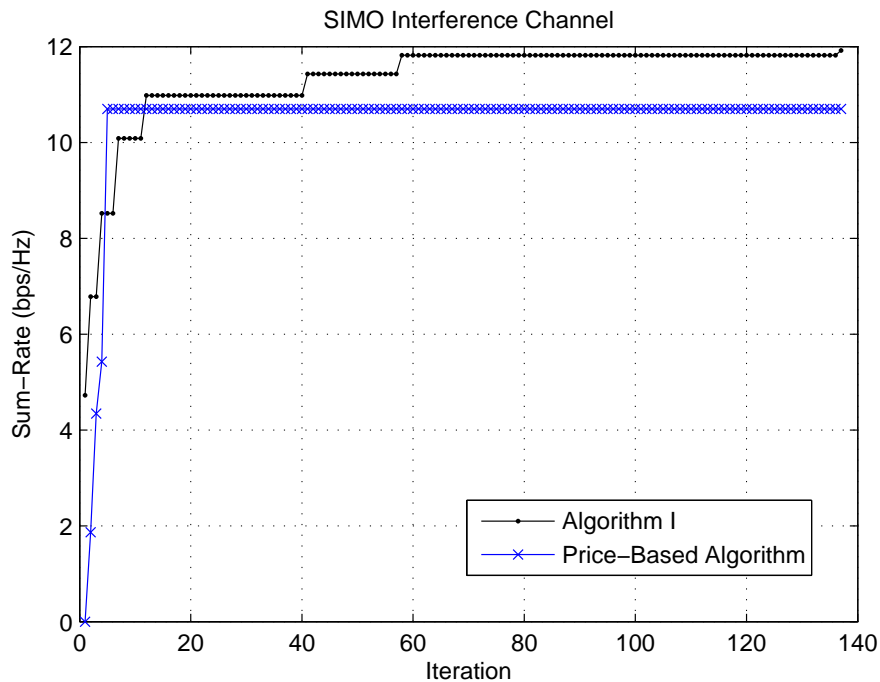}}
\end{center}
\caption{Performance comparison for Algorithm \ref{table1} versus
the price-based algorithm in SIMO-IC.}\label{fig5}
\end{figure}

\begin{figure}
\begin{center}
\scalebox{0.9}{\includegraphics*[98pt,275pt][483pt,575pt]{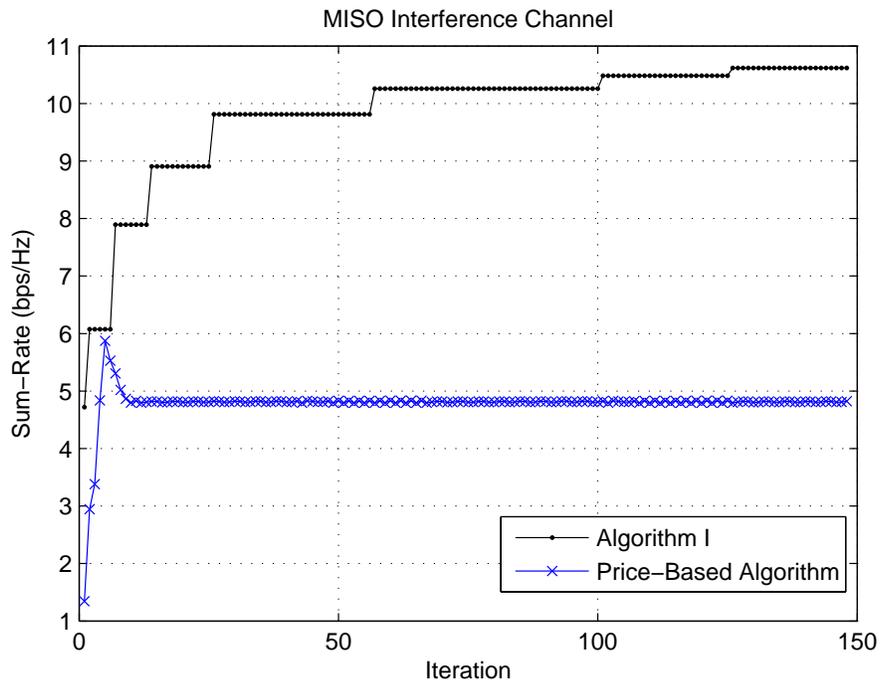}}
\end{center}
\caption{Performance comparison for Algorithm \ref{table1} versus
the price-based algorithm in MISO-IC.}\label{fig6}
\end{figure}

A key application of our proposed algorithm is to provide
performance benchmarks for other heuristic algorithms for achieving
the maximum WSR in the GIC, especially in cases of MISO-IC and
SIMO-IC where the globally optimal solution by exhaustive search is
hardly possible. In the following, we provide an example to show how
to utilize our proposed algorithm to evaluate the performance of
other suboptimal algorithms for WSR maximization in MISO-IC and
SIMO-IC.

We consider the ``price-based'' suboptimal algorithm, e.g., the ADP
algorithm, which was proposed in \cite{Huang06} as an efficient
distributed algorithm for WSR maximization in SISO-IC. Since to our
best knowledge, extensions of ADP to the multi-antenna GIC are not
yet available in the literature, we provide the details for such
extensions for SIMO-IC and MISO-IC in Appendix \ref{ADP}.

Figs. \ref{fig5} and \ref{fig6} show the achievable sum-rates by the
price-based algorithm versus Algorithm \ref{table1} for $4$-user
SIMO-IC and MISO-IC, respectively, without the minimum rate
constraints. Each element in all channel vectors involved is
randomly generated by the CSCG distribution with zero mean and unit
variance. We set the parameters to control the accuracy of Algorithm
\ref{table1} as $\epsilon=0.01$ and $\eta=0.5$. In Fig. \ref{fig5},
the price-based algorithm converges to the sum-rate of $10.6989$ in
SIMO-IC, while the maximum sum-rate achieved by Algorithm
\ref{table1} is $11.9182$. In Fig. \ref{fig6}, the price-based
algorithm converges to the sum-rate of $4.8216$ (although it has
reached almost $6$ before convergence) in MISO-IC, while Algorithm
\ref{table1} achieves the maximum sum-rate of $10.6193$. Based on
these results as well as other numerical examples (not shown in this
paper due to the space limitation), we infer that in general the
price-based algorithm for SIMO-IC performs better than MISO-IC, as
both compared with our proposed  algorithm that achieves the global
sum-rate maximum. Moreover, the price-based algorithm for MISO-IC
does not converge under certain channel setups, while even when the
algorithm converges, the resulted sum-rate can be far from the
global maximum. In contrast, for SIMO-IC, the price-based algorithm
usually achieves the sum-rate very close to the global maximum, and
even converges to it under certain channel setups.

\section{Concluding Remarks}\label{sec:conclusion}

In this paper, we propose a new framework to achieve the global
optimality of WSR maximization problems in SISO-IC, SIMO-IC, and
MISO-IC, with the interference treated as Gaussian noise. Although
the studied problems are non-convex with respect to the power
allocation and/or beamforming vectors, we show that they belong to
the monotonic optimization over a normal set by reformulating them
as maximizing the WSR in the achievable rate regions directly.
Therefore, the outer polyblock approximation algorithm can be
applied to achieve the global WSR maximum. Furthermore, by utilizing
the approach of rate profile, at each iteration of the proposed
algorithm, the updated intersection point on the Pareto boundary of
the achievable rate region is efficiently obtained via solving a
sequence of SINR feasibility problems. It is worth noting that
although the developed framework in this paper is aimed to solve the
WSR maximization problem for the GIC, it can be similarly applied to
other multiuser communication systems with non-convex rate regions
provided that the problem of characterizing the intersection Pareto
boundary point with an arbitrary rate-profile vector can be
efficiently solved.

It is worth pointing out that based on our numerical experiments,
the proposed algorithm in this paper is found to converge very
slowly when the number of users becomes large (e.g., $K \geq 6$),
and thus may not be suitable for real-time implementation.
Nevertheless, the proposed algorithm can be applied to provide
performance benchmarks for other real-time algorithms that usually
guarantee only suboptimal solutions. It is our hope that this paper
will motivate future work to improve the convergence speed of the
proposed algorithm and thus make it more applicable in practical
systems, even with large number of users. For example, in
\cite{Audet05}, the original point for the algorithm is shifted from
the origin to a point in the negative plane, which is shown to speed
up the convergence to some extent.

After the submission of this manuscript, we become aware of one
interesting related work \cite{Otterstan} that is worth mentioning.
In \cite{Otterstan}, a similar framework is proposed to optimize the
system performance for multi-cell downlink MISO beamforming (similar
to MISO-IC in nature), e.g., sum-rate performance and proportional
fairness, by making use of monotonic optimization and rate profile
techniques. One difference between \cite{Otterstan} and our work is
that for the monotonic optimization part, a so-called
``branch-reduce-and-bound'' algorithm is used in \cite{Otterstan} as
compared to the outer polyblock approximation algorithm in our
paper.

\begin{appendix}

\subsection{Proof of Lemma \ref{lem:unique solution}}\label{appendix:proof of lemma}

Note that under the sum power constraint, a similar result to this
lemma has been shown in \cite{Xu98}. However, the proof in
\cite{Xu98} is not directly applicable in our case since in
(\ref{equ:matrix form2}), there is an individual power constraint
rather than the sum power constraint. Thus, we need to provide a new
proof for this lemma shown as follows.

Suppose that there are two solutions to (\ref{equ:matrix form2}),
denoted by $(\bar{\mv{p}},C(\bar{\mv{W}}))$ and
$(\bar{\mv{p}}',C'(\bar{\mv{W}}))$. Define a sequence of
$\theta_k$'s as $\theta_k=\frac{\bar{p}_k'}{\bar{p}_k}$, $\forall
k$. We can re-arrange $\theta_k$'s in a decreasing order
by\begin{align}\label{eqn:sequence}\theta_{t1}\geq \theta_{t2} \geq
\cdots \geq \theta_{tK}.\end{align}Since according to
(\ref{eqn:equal power}) we have $\bar{p}_i=\bar{p}_i'=P_i^{{\rm
max}}$, it follows that $\theta_i=1$ must hold. Hence,
$\theta_{t1}\geq \theta_i=1$. Moreover, in (\ref{eqn:sequence}), at
least one strict inequality must hold because otherwise
$\theta_k=1$, $\forall k$, which then implies that only one unique
solution to (\ref{equ:matrix form2}) exists.

Next, we derive the SINR balancing value of user $t1$ as
follows:\begin{align}C_{t1}'(\bar{\mv{W}})&=\frac{\bar{p}_{t1}'\|\bar{\mv{w}}_{t1}^H\mv{
h}_{t1,t1}\|^2}{\bar{\mv{w}}_{t1}^H(\sum\limits_{j\neq
t1}\bar{p}_j'\mv{h}_{t1,j}\mv{
h}_{t1,j}^H+\sigma_{t1}^2\mv{I})\bar{\mv{w}}_{t1}\bar{\gamma}_{t1}}
\nonumber \\ &=\frac{\bar{p}_{t1}\|\bar{\mv{w}}_{t1}^H\mv{
h}_{t1,t1}\|^2}{\bar{\mv{w}}_{t1}^H(\sum\limits_{j\neq
t1}\bar{p}_j\mv{h}_{t1,j}\mv{
h}_{t1,j}^H\frac{\theta_j}{\theta_{t1}}+\sigma_{t1}^2\mv{I}\frac{1}{\theta_{t1}})\bar{\mv{w}}_{t1}\bar{\gamma}_{t1}}
\nonumber \\ &>\frac{\bar{p}_{t1}\|\bar{\mv{w}}_{t1}^H\mv{
h}_{t1,t1}\|^2}{\bar{\mv{w}}_{t1}^H(\sum\limits_{j\neq
t1}\bar{p}_j\mv{h}_{t1,j}\mv{
h}_{t1,j}^H+\sigma_{t1}^2\mv{I})\bar{\mv{w}}_{t1}\bar{\gamma}_{t1}}
\nonumber \\ & =C_{t1}(\bar{\mv{W}}).\end{align}Based on
(\ref{equ:equal sinr balancing level}), we have
\begin{align}\label{eqn:greater}C'(\bar{\mv{W}})=C'_{t1}(\bar{\mv{W}})>C_{t1}(\bar{\mv{W}})=C(\bar{\mv{W}}).\end{align}Similarly, we can show that
$C'_{tK}(\bar{\mv{W}})<C_{tK}(\bar{\mv{W}})$, which
yields\begin{align}\label{eqn:less}C'(\bar{\mv{W}})=C'_{tK}(\bar{\mv{W}})<C_{tK}(\bar{\mv{W}})=C(\bar{\mv{W}}).\end{align}Since
(\ref{eqn:greater}) and (\ref{eqn:less}) contradict to each other,
there must be one unique solution to (\ref{equ:matrix form2}). Lemma
\ref{lem:unique solution} is thus proven.

\subsection{Price-Based Algorithm for SIMO-IC and
MISO-IC}\label{ADP} In this part, we provide the details of the
suboptimal price-based algorithms for Problems (P1.2) in SIMO-IC and
(P1.3) in MISO-IC, which can be viewed as extensions of the ADP
algorithm proposed in \cite{Huang06} for SISO-IC. In ADP, each user
announces a price that reflects its sensitivity to the interference
from all other users, and then updates its transmit power by
maximizing its own utility offset by the sum interference price
received from all the other users. It was shown in \cite{Huang06}
that ADP can converge to the solution that has the same
Karush-Kuhn-Tucker (KKT) conditions as that of the WSR maximization
problem, and is thus guaranteed to achieve at least a locally
optimal solution. In the following, we extend the ADP algorithm in
\cite{Huang06} to SIMO-IC and MISO-IC, but without the proof of
convergence.

\subsubsection{Price-Based Algorithm for
SIMO-IC}\label{appendix:interference pricing SIMO-IC}

\ \ \

In this part, we extend the ADP or price-based algorithm to SIMO-IC.
First, without loss of generality, we substitute the optimal
MMSE-based receive beamforming vectors for $\mv{w}_k$'s into the
SINR expression given in (\ref{equ:SINR in SIMO-IC}). Then, given
any transmit power vector $\mv{p}$, the achievable rate for user $k$
can be expressed as
\begin{align}R_k(\mv{p})=\log_2(1+\gamma_k^{{\rm
SIMO-IC}})=\log_2\bigg(1+p_k\mv{h}_{k,k}^H(\sum\limits_{j\neq
k}p_j\mv{h}_{k,j}\mv{h}_{k,j}^H+\sigma_k^2\mv{I})^{-1}\mv{h}_{k,k}\bigg).\label{eq:MMSE
SINR}\end{align}Thus in Problem (P1.2), we only need to find the
optimal transmit power solution, without the need to consider the
receive beamforming optimization.

Next, we present the KKT optimality conditions of Problem (P1.2)
with the objective function specified in (\ref{eq:MMSE SINR}). For
any locally optimal power solution $\mv{p}^\ast$, there exist unique
Lagrangian multipliers $\mv \lambda=(\lambda_1,\cdots,\lambda_K)$
such that for any $k=1,\cdots,K$,\begin{align}& \mu_k\frac{\partial
R_k(\mv{p}^\ast)}{\partial p_k}+\sum\limits_{j\neq
k}\mu_j\frac{\partial R_j(\mv{p}^\ast)}{\partial p_k}=\lambda_k ,
\label{equ:KKT1}\\& \lambda_k(P_k^{{\rm max}}-p_k)=0,
\label{equ:KKT2}\\& \lambda_k\geq 0. \label{equ:KKT3}\end{align}

Now, for the price-based algorithm, define the price charged by
receiver $j$ to transmitter $k$, which indicates the sensitivity of
the achievable rate of receiver $j$ subject to the power change of
transmitter $k$, as
\begin{align}\pi_{j,k}=-\frac{\partial
R_j(\mv{p})}{\partial
p_k}=\frac{p_j\|\mv{h}_{j,j}^H(\sum\limits_{i\neq
j}p_i\mv{h}_{j,i}\mv{h}_{j,i}^H+\sigma_j^2\mv{I})^{-1}\mv{h}_{j,k}\|^2}{\ln2\left(1+p_j\mv{h}_{j,j}^H(\sum\limits_{i\neq
j}p_i\mv{h}_{j,i}\mv{h}_{j,i}^H+\sigma_j^2\mv{I})^{-1}\mv{h}_{j,j}\right)}.\label{eq:price}\end{align}Consequently,
we see that the KKT conditions in (\ref{equ:KKT1}), (\ref{equ:KKT2})
and (\ref{equ:KKT3}) are both necessary and sufficient for the
optimal solution to the following problem for user $k$,
$k=1,\cdots,K$:
\begin{align}\label{equ:SIMO price}\mathop{\mathtt{Maximize}}_{p_k} & ~~~
\mu_k\log_2\bigg(1+p_k\mv{h}_{k,k}^H(\sum\limits_{j\neq
k}p_j\mv{h}_{k,j}\mv{h}_{k,j}^H+\sigma_k^2\mv{I})^{-1}\mv{h}_{k,k}\bigg)
-p_k\sum\limits_{j\neq k}\mu_j\pi_{j,k} \nonumber \\
\mathtt{Subject \ to} & ~~~ p_k\leq P_k^{{\rm max}},\end{align}where
$p_j$ and $\pi_{j,k}$ are fixed, $\forall j\neq k$.

Similar to the ADP algorithm in \cite{Huang06}, we propose the
following algorithm to update the price and transmit power
iteratively for all users in SIMO-IC. Specifically, at each
iteration the algorithm does the following:\begin{itemize}\item[1.]
Each user announces its price obtained using (\ref{eq:price}) to all
the other users;
\item[2.] Each user updates its transmit power by solving Problem
(\ref{equ:SIMO price}), i.e.,
\begin{align}p_k=&\bigg[\frac{\mu_k}{\ln2\sum\limits_{j\neq
k}\mu_j\pi_{j,k}}-\frac{1}{\mv{h}_{k,k}^H(\sum\limits_{j\neq
k}p_j\mv{h}_{k,j}\mv{h}_{k,j}^H+\sigma_k^2\mv{I})^{-1}\mv{h}_{k,k}}\bigg]_0^{P_k^{{\rm
max}}}, ~~ \forall k, \end{align}\end{itemize}where
$[x]_a^b=\max(\min(x,b),a)$.

Because Problems (P1.2) and (\ref{equ:SIMO price}) possess the same
KKT optimality conditions, when the above algorithm converges to a
set of optimal solutions to problems in (\ref{equ:SIMO price}) for
all $k$'s, this set of solutions will be at least a locally optimal
solution to Problem (P1.2).

\subsubsection{Price-Based Algorithm for
MISO-IC}\label{appendix:interference pricing MISO-IC}

\ \ \

Next, we extend the ADP algorithm to MISO-IC. For any given transmit
beamforming vectors $\mv V$, we first define the price for user $k$
as
\begin{align}\label{equ:price}\pi_k&=-\frac{\partial R_k}{\partial
\Gamma_k}=\frac{\|\mv{h}_{k,k}^H\mv{v}_k\|^2}{\ln2(\|\mv{h}_{k,k}^H\mv{v}_k\|^2+\Gamma_k+\sigma_k^2)(\Gamma_k+\sigma_k^2)},\end{align}where
$\Gamma_k=\sum\limits_{j\neq k}\|\mv{h}_{k,j}^H\mv{v}_j\|^2$ is the
total interference power at the $k$th receiver. Let
$\mv{S}_k=\mv{v}_k\mv{v}_k^H$, $\forall k$. Given fixed interference
prices and beamforming vectors of all other users, the following
problem is to be solved by any user $k$ for its own transmit
beamforming update:\begin{align}\label{equ:game
theory}\mathop{\mathtt{Maximize}}_{\mv{S}_k} & ~~~
\mu_k\log_2(1+\frac{\mv{h}_{k,k}^H\mv{S}_k\mv{h}_{k,k}}{\Gamma_k+\sigma_k^2})-\sum\limits_{j\neq
k}\mu_j\pi_j\mv{h}_{k,j}^H\mv{S}_k\mv{h}_{j,k} \nonumber \\
\mathtt{Subject \ to} & ~~~{\rm Tr}(\mv{S}_k)\leq P_k^{{\rm max}}
\nonumber \\ & ~~~\mv{S}_k\succeq 0,\end{align}where
$\mv{S}_k\succeq 0$ means that $\mv{S}_k$ is a positive
semi-definite matrix. Similar to the previous case of SIMO-IC, we
can show that the KKT conditions of Problem (\ref{equ:game theory})
with $k=1,\cdots,K$ are also those of Problem (P1.3) by replacing
$\mv{v}_k\mv{v}_k^H$ with $\mv{S}_k$, $\forall k$. However, Problem
(P1.3) requires that the optimal solution $\mv{S}_k^\ast$ in Problem
(\ref{equ:game theory}) to be rank-one, which is not guaranteed a
priori. Thus, Problem (\ref{equ:game theory}) is a relaxation of the
original WSR maximization problem (P1.3) without considering the
rank-one constraint.

Interestingly, it was recently shown in \cite{Rui10} that the
optimal solution to Problem (\ref{equ:game theory}) is always of
rank-one, i.e., $\mv{S}_k^\ast=\mv{v}_k\mv{v}_k^H$. Hence, we
propose a price-based algorithm for MISO-IC in a similar way to that
for SIMO-IC. When this algorithm converges to a set of optimal
solutions to problems in (\ref{equ:game theory}) with
$k=1,\cdots,K$, this set of solutions are all rank-one and thus
corresponds to at least a locally optimal solution to Problem
(P1.3).

For this price-based algorithm for MISO-IC, the interference price
can be iteratively updated according to (\ref{equ:price}). As for
the update of beamforming vectors, we need to solve Problem
(\ref{equ:game theory}) for each user $k$. It can be verified that
Problem (\ref{equ:game theory}) is convex with strictly feasible
points, and thus it can be solved by the standard Lagrangian duality
method \cite{Boyd04} with a zero duality gap. The details of solving
Problem (\ref{equ:game theory}) can be found in Appendix
\uppercase\expandafter{\romannumeral1} of \cite{Rui10}, and are thus
omitted here.

\end{appendix}

\end{document}